\providecommand{\tabularnewline}{\\}
\newcommand{\lyxaddress}[1]{
\par {\raggedright #1
\vspace{1.4em}
\noindent\par}
}
\theoremstyle{plain}
\newtheorem{thm}{\protect\theoremname}[section]
  \theoremstyle{remark}
  \newtheorem{rem}[thm]{\protect\remarkname}
  \theoremstyle{definition}
  \newtheorem{defn}[thm]{\protect\definitionname}
  \theoremstyle{definition}
  \newtheorem{example}[thm]{\protect\examplename}
  \theoremstyle{plain}
  \newtheorem{cor}[thm]{\protect\corollaryname}
  \theoremstyle{plain}
  \newtheorem*{thm*}{\protect\theoremname}
  \theoremstyle{plain}
  \newtheorem*{cor*}{\protect\corollaryname}
\date{}
\newcommand{\R}{\mathcal{R}}
\newcommand{\T}{\mathcal{T}}
\newcommand{\D}{\mathcal{D}}
\newcommand{\C}{\mathsf{C}}
\def\frontmatter@abstractheading{}
\renewcommand{\thesection}{\arabic{section}}
\renewcommand{\p@subsection}{}
\renewcommand{\p@subsubsection}{}
  \providecommand{\corollaryname}{Corollary}
  \providecommand{\definitionname}{Definition}
  \providecommand{\examplename}{Example}
  \providecommand{\remarkname}{Remark}
  \providecommand{\theoremname}{Theorem}
\providecommand{\theoremname}{Theorem}
\begin{document}

\title{Contextuality in Canonical Systems of Random Variables}

\author{Ehtibar N. Dzhafarov,\textsuperscript{1} Víctor H. Cervantes,\textsuperscript{2 }and
Janne V. Kujala\textsuperscript{3 }}
\maketitle

\lyxaddress{\begin{center}
\textsuperscript{1}Purdue University, USA, ehtibar@purdue.edu \\
\textsuperscript{2}Purdue University, USA, cervantv@purdue.edu \\
\textsuperscript{3}University of Jyväskylä, Finland, jvk@iki.fi 
\par\end{center}}
\begin{abstract}
Random variables representing measurements, broadly understood to
include any responses to any inputs, form a system in which each of
them is uniquely identified by its content (that which it measures)
and its context (the conditions under which it is recorded). Two random
variables are jointly distributed if and only if they share a context.
In a canonical representation of a system, all random variables are
binary, and every content-sharing pair of random variables has a unique
maximal coupling (the joint distribution imposed on them so that they
coincide with maximal possible probability). The system is contextual
if these maximal couplings are incompatible with the joint distributions
of the context-sharing random variables. We propose to represent any
system of measurements in a canonical form and to consider the system
contextual if and only if its canonical representation is contextual.
As an illustration, we establish a criterion for contextuality of
the canonical system consisting of all dichotomizations of a single
pair of content-sharing categorical random variables.

KEYWORDS: canonical systems, contextuality, dichotomization, direct
influences, measurements. 
\end{abstract}

\section{\label{sec: Introduction}Introduction}

We begin by recapitulating the basics of our theory of ``quantum-like''
contextuality, and then explain how this theory is developed in this
paper. The name of the theory is \emph{Contextuality-by-Default} (CbD),
and its recent accounts can be found in Ref. \cite{DzhafarovKujala(2016)Fortschritte,DzhafarovKujala(2017)LNCS,DzhafarovKujala(2016)Context-Content}. 
\begin{rem}
We use the following two notation conventions throughout the paper:
(1) due to its frequent occurrence we abbreviate the term \emph{random
variable} as \emph{rv} (\emph{rvs} in plural); and (2) we unconventionally
capitalize the words \emph{conteNt} and \emph{conteXt} to prevent
their confusion in reading.
\end{rem}
The matrix below represents the smallest possible version of what
we call a \emph{cyclic system} \cite{DzhafarovKujalaLarsson(2015),KujalaDzhafarov(2016)Proof,KujalaDzhafarovLar(2015),KujalaDzhafarov(2015)}:
\begin{center}
\begin{tabular}{|c|c|c}
\cline{1-2} 
$R_{1}^{1}$ & $R_{2}^{1}$ & $c=1$\tabularnewline
\cline{1-2} 
$R_{1}^{2}$ & $R_{2}^{2}$ & $c=2$\tabularnewline
\cline{1-2} 
\multicolumn{1}{c}{$q=1$} & \multicolumn{1}{c}{$q=2$} & $\boxed{\R}$\tabularnewline
\end{tabular}.
\par\end{center}

\noindent Each of the rvs $R_{q}^{c}$ represents measurements of
one of two properties, $q=1$ or $q=2$, under one of two conditions,
$c=1$ or $c=2$. The ``properties'' $q$ can also be called ``objects,''
``inputs,'' ``stimuli,'' etc., depending on the application, and
we refer to $q$ generically as the \emph{conteNt} of the measurement
$R_{q}^{c}$. The superscript $c$ in $R_{q}^{c}$ describes how and
under what circumstances $q$ is measured, including what other conteNts
are measured together with $q$. We refer to $c$ generically (and
traditionally) as the \emph{conteXt} of the measurement $R_{q}^{c}$.
The conteNt-conteXt pair $\left(q,c\right)$ provides a \emph{unique
identification} of $R_{q}^{c}$ within the system of measurements
$\R$. In addition, being an rv, $R_{q}^{c}$ is characterized by
its \emph{distribution}. In this paper, consideration is confined
to \emph{categorical rvs}, those with a finite number of values. The
term ``measurement'' is understood very broadly, to include any
response to any input or stimulus.

Let us begin with the simplest case of the system $\R$, when all
four rvs $R_{q}^{c}$ are \emph{binary}. In quantum physics, $R_{q}^{c}$
may describe a measurement of spin along one of two fixed axes, $q=1$
or $q=2$, in a spin-$\nicefrac{1}{2}$ particle. In psychology, $R_{q}^{c}$
may describe a response to one of two Yes-No questions, $q=1$ or
$q=2$. In both applications, in conteXt $c=1$ one measures first
$q=1$ and then $q=2$; in conteXt $c=2$ the measurements are made
in the opposite order. The rvs sharing a conteXt $c$ are recorded
in pairs, $\left(R_{1}^{c},R_{2}^{c}\right)$, which means that they
are \emph{jointly distributed} and can be viewed as a single (here,
four-valued) rv. No such joint distribution is defined for rvs in
different conteXts, such as $R_{2}^{1}$ and $R_{1}^{2}$. They are
\emph{stochastically unrelated} (to each other): one cannot ask about
the probability of an ``event'' $\left[R_{2}^{1}=x,R_{1}^{2}=y\right]$,
as no such ``event'' is defined. In particular, two conteNt-sharing
rvs, $R_{q}^{1}$ and $R_{q}^{2}$, are always stochastically unrelated,
hence they can never be considered one and the same rv, even if they
are identically distributed (see Ref. \cite{DzhafarovKujala(2016)Fortschritte}
for a detailed probabilistic analysis). 

In both applications mentioned, the distributions of $R_{q}^{1}$
and $R_{q}^{2}$ are de facto different. In the quantum-mechanical
example, the first spin measurement generally changes the state of
the particle \cite{Bacciagaluppi(2015)}. Assuming identical preparations
in both conteXts $c$, therefore, the state of the particle when a
$q$-spin is measured first will be different from that when it is
measured second. In the behavioral example, one's response to a question
asked second will generally be influenced by the question asked first
\cite{WangBusemeyer2013,Moore}. This creates obvious \emph{conteXt-dependence}
of the measurements, but this is not what we call \emph{contextuality}
in our theory. The original meaning of the term in quantum mechanics,
when translated into the language of probability theory (as in Refs.
\cite{DzhafarovKujala(2016)Context-Content,DzhafarovKujala(2016)Fortschritte,DzhafarovKujalaCervantes(2016)LNCS}
and, with caveats, \cite{Cabello(2013),KujalaDzhafarovLar(2015),KurzynskiRamanathanKaszlikowski(2012),Khr2009,Khr2005,Fine1982,SuppesZanotti1981}),
is that measurements of one and the same physical property $q$ have
to be represented by different rvs depending on what other properties
are being measured together with $q$ \textemdash{} even when the
laws of physics exclude all \emph{direct interactions} (energy/information
transfer) between the measurements. By extension, when such direct
interactions are present, as they are in our two applications of the
system $\R$, we speak of contextuality only if the dependence of
$R_{q}^{c}$ on $c$ is greater, in a well-defined sense, than just
the changes in its distribution. Contextuality is a non-causal aspect
of conteXt-dependence, revealed in the probabilistic relations between
different measurements rather than in their individual distributions.

This is how this understanding is implemented in CbD. We characterize
the conteXt-induced changes in the individual distributions, i.e.,
the difference between those of $R_{q}^{1}$ and $R_{q}^{2}$, by
\emph{maximally coupling} them. This means that we replace $R_{q}^{1}$
and $R_{q}^{2}$ with jointly distributed $T_{q}^{1}$ and $T_{q}^{2}$
that have the same respective individual distributions, and among
all such couplings we find one with the maximal value of $\Pr\left[T_{q}^{1}=T_{q}^{2}\right]$.
This maximal coupling $\left(T_{q}^{1},T_{q}^{2}\right)$ always exists
and is unique. The next step is to see if there exists an \emph{overall
coupling} $S$ of $\R$, a jointly distributed quadruple with elements
corresponding to those of $\R$,
\begin{center}
\begin{tabular}{|c|c|cc}
\cline{1-2} 
$S_{1}^{1}$ & $S_{2}^{1}$ & $c=1$ & \multirow{2}{*}{}\tabularnewline
\cline{1-2} 
$S_{1}^{2}$ & $S_{2}^{2}$ & $c=2$ & \tabularnewline
\cline{1-2} 
\multicolumn{1}{c}{$q=1$} & \multicolumn{1}{c}{$q=2$} & $\boxed{S}$ & \tabularnewline
\end{tabular},
\par\end{center}

\noindent such that its rows $\left(S_{1}^{c},S_{2}^{c}\right)$
are distributed as the rows of $\R$ and its columns $\left(S_{q}^{1},S_{q}^{2}\right)$
are distributed as the maximal couplings $\left(T_{q}^{1},T_{q}^{2}\right)$
of the columns of $\R$. If such a\emph{ maximally-connected} coupling
$S$ does not exist, one can say that the within-conteXt (row-wise)
relations prevent different measurements of the same conteNt (column-wise)
to be as close to each other as this is allowed by the direct influences
alone. Put differently, the relations of $R_{q}^{1}$ and $R_{q}^{2}$
with their same-conteXt counterparts force them, if imposed a joint
distribution upon, to coincide less frequently than if these relations
are ignored. The system then is deemed \emph{contextual}. Conversely,
if the coupling $S$ above exists, the within-conteXt relations do
not make the measurements of $R_{q}^{1}$ and $R_{q}^{2}$ any more
dissimilar than required by the direct influences: the system is \emph{noncontextual}.

The (non)existence of $S$ is determined by a simple linear programing
procedure \cite{DzhafarovKujalaLarsson(2015),DzhafarovKujala(2016)Context-Content}:
in our example, $S$ has $2^{4}$ possible values, and we find out
if they can be assigned nonnegative numbers (probability masses) that
sum to the given row-wise probabilities $\Pr\left[R_{1}^{c}=x,R_{2}^{c}=y\right]$
and the computed column-wise probabilities $\left[T_{q}^{1}=x,T_{q}^{2}=y\right]$.
There is also a simple criterion (inequality) for the existence of
a solution for this system of equations \cite{DzhafarovKujalaLarsson(2015),KujalaDzhafarov(2016)Proof,KujalaDzhafarovLar(2015)}.
Using it one can show, e.g., that in our quantum-mechanical application
the system $\R$ is always noncontextual, and so it is in the behavioral
application if one adopts the model proposed in Ref. \cite{WangBusemeyer2013}
(see Ref. \cite{DzhafarovZhangKujala(2015)Isthere} for details).
Mathematically, however, the system $\R$ can be contextual, and if
it is, CbD provides a simple way of computing the \emph{degree} of
its contextuality \cite{DzhafarovKujala(2016)Context-Content}: one
replaces the probability masses in the above linear programing task
with \emph{quasi-probabilities}, allowed to be negative, and finds
among the solutions the minimum sum of their absolute values (see
Section \ref{subsec: Degree-of-contextuality}). 

Although most of these principles and procedures of CbD have been
formulated for arbitrary systems of measurements \cite{DzhafarovKujalaCervantes(2016)LNCS,DzhafarovKujala(2016)Context-Content},
they only work without complications with systems that satisfy the
following two constraints: (A) they contain only binary rvs, and (B)
there are no more than two rvs sharing a conteNt (i.e., occupying
the same column). What we propose in this paper is to always present
a system of measurements in a \emph{canonical form}, which is in essence
one with the properties A and B. The cyclic systems form a subclass
of canonical systems, rich enough to cover most experimental paradigms
of traditional interest in quantum-mechanical and behavioral contextuality
studies \cite{DzhafarovKujala(2016)Context-Content,DzhafarovKujalaCervantes(2016)LNCS,DzhafarovKujalaCervantesZhangJones(2016),DzhafarovKujalaLarsson(2015),DzhafarovZhangKujala(2015)Isthere,KujalaDzhafarovLar(2015)},
but far from satisfactory generality. 

What are the complications one faces if a system does not satisfy
the properties A and B? Consider the system below, with all its rvs
binary but with three rather than two of them in each column:
\begin{center}
\begin{tabular}{|c|c|cc}
\cline{1-2} 
$R_{1}^{1}$$\ensuremath{}$ & $R_{2}^{1}$$\ensuremath{}$ & $c=1$ & \multirow{3}{*}{}\tabularnewline
\cline{1-2} 
$R_{1}^{2}$$\ensuremath{}$ & $R_{2}^{2}$$\ensuremath{}$ & $c=2$ & \tabularnewline
\cline{1-2} 
$R_{1}^{3}$$\ensuremath{}$ & $R_{2}^{3}$$\ensuremath{}$ & $c=3$ & \tabularnewline
\cline{1-2} 
\multicolumn{1}{c}{$q=1$} & \multicolumn{1}{c}{$q=2$} & $\boxed{\R'}$ & \tabularnewline
\end{tabular}
\par\end{center}

\noindent How does CbD apply here? In the earlier version of theory
(summarized in Refs. \cite{DzhafarovKujala(2016)Context-Content,DzhafarovKujalaCervantes(2016)LNCS})
we computed the couplings $\left(T_{q}^{1},T_{q}^{2},T_{q}^{3}\right)$
of each column that maximize $\Pr\left[T_{q}^{1}=T_{q}^{2}=T_{q}^{3}\right]$.
One problem with this approach is that the maximal coupling $\left(T_{q}^{1},T_{q}^{2},T_{q}^{3}\right)$,
while it always exists, is not defined uniquely. What should be the
contextuality analysis of $\R'$ if the within-conteXt (row-wise)
distributions are compatible with some but not all combinations of
the maximal couplings for the two columns? Shall one then speak of
a partial (non)contextuality? Originally we proposed to consider a
system noncontextual if it is compatible with at least one of these
pairs of maximal couplings, but in addition to being arbitrary, this
leads to another complication: it may then very well happen that the
system $\R'$ is noncontextual but one of its subsystems, e.g. $\R$,
is contextual. This is contrary to one's intuition of noncontextuality. 

In the most recent publications therefore \cite{DzhafarovKujala(2017)LNCS,DzhafarovKujala(2016)Fortschritte}
we modified our approach into ``CbD 2.0,'' by positing that a coupling
for conteNt-sharing measurements should be computed so that it maximizes
the probability of coincidence for every \emph{pair} (equivalently,
every subset) of them. In our case, this means maximization of $\Pr\left[T_{q}^{1}=T_{q}^{2}\right]$,
$\Pr\left[T_{q}^{2}=T_{q}^{3}\right]$, and $\Pr\left[T_{q}^{1}=T_{q}^{3}\right]$
(it is in fact sufficient to maximize only certain pairs rather than
all of them, but this is not critical here). Such a coupling $\left(T_{q}^{1},T_{q}^{2},T_{q}^{3}\right)$
is called \emph{multimaximal}. With only binary rvs involved, a multimaximal
coupling always exists and is unique; and a subsystem of a noncontextual
system then is always noncontextual. 

Returning to system $\R$, consider now the situation when the measurements
involved are not dichotomous. For example, let the two successive
spin measurements along axes $q=1$ and $q=2$ be made on a hypothetical
spin-$2$ particle, with the measurement outcomes denoted $\left\{ -2,-1,0,1,2\right\} $.
In the behavioral application, let the questions asked allow 5 answers
each, labeled in the same way. A maximal coupling in this situation
exists for each column of $\R$, but not uniquely. This takes us back
to the problem of what one should do if the row-wise distributions
are compatible with some but not all pairs of these maximal couplings.
Another problem is even harder. If the system is deemed noncontextual,
one may consider it desirable that it remain noncontextual after some
of the measurement outcomes are ``lumped together.'' Thus, one may
wish to consider $\left\{ -2,-1,0,1,2\right\} $ in terms of ``negative-zero-positive,''
lumping together $-2$ with $-1$ and $2$ with $1$. Or one may wish
to look at the outcomes in terms of ``zero-nonzero.'' As it turns
out, a noncontextual system may become contextual after such coarsening
of some of its measurements. 

Both these problems can be resolved if we agree that \emph{every measurement
included in the system, empirically recorded or computed from those
empirically recorded, should be represented by a set of binary rvs}.
Let us denote by $D_{qW}^{c}$ the Bernoulli rv that equals 1 if the
value of $R_{q}^{c}$ is within the subset $W$ of its possible values.
We call $D_{qW}^{c}$ a \emph{split} (of the original rv). We posit
that a measurement with $k$ distinct values should always be represented
by $k$ ``detectors'' of these values, i.e. the splits with one-element
subsets $W$. Thus, in our system $\R$, each measurement $R_{q}^{c}$
should be replaced with the jointly distributed splits 
\[
\left(D_{q\left\{ -2\right\} }^{c},D_{q\left\{ -1\right\} }^{c},D_{q\left\{ 0\right\} }^{c},D_{q\left\{ 1\right\} }^{c},D_{q\left\{ 2\right\} }^{c}\right).
\]
If one is also interested in the coarsening of $R_{q'}^{c}$ into
values ``negative-zero-positive,'' then the list should be expanded
into
\[
\left(D_{q\left\{ -2\right\} }^{c},D_{q\left\{ -1\right\} }^{c},D_{q\left\{ 0\right\} }^{c},D_{q\left\{ 1\right\} }^{c},D_{q\left\{ 2\right\} }^{c},D_{q\left\{ -2,-1\right\} }^{c},D_{q\left\{ 1,2\right\} }^{c}\right).
\]
If one wishes to include \emph{all} possible coarsenings of the original
rvs in $\R$, then the set of binary rvs should consist of \emph{all}
possible splits. Since every dichotomization creating a split should
be applied to all rvs sharing a conteNt, one ends up replacing the
system $\R$ with
\begin{center}
\begin{tabular}{|c|c|c|c|c|c|c|c|c}
\cline{1-8} 
$D_{1\left\{ -2\right\} }^{1}$ & $\cdots$ & $D_{1\left\{ 2\right\} }^{1}$ & $D_{1\left\{ -2,-1\right\} }^{1}$ & $\cdots$ & $D_{1\left\{ 1,2\right\} }^{1}$ & $\cdots$ & $D_{2\left\{ 1,2\right\} }^{1}$ & $c=1$\tabularnewline
\cline{1-8} 
$D_{1\left\{ -2\right\} }^{2}$ & $\cdots$ & $D_{1\left\{ 2\right\} }^{2}$ & $D_{1\left\{ -2,-1\right\} }^{2}$ & $\cdots$ & $D_{1\left\{ 1,2\right\} }^{2}$ & $\cdots$ & $D_{2\left\{ 1,2\right\} }^{2}$ & $c=2$\tabularnewline
\cline{1-8} 
\multicolumn{1}{c}{$q=1\left\{ -2\right\} $} & \multicolumn{1}{c}{$\cdots$} & \multicolumn{1}{c}{$q=1\left\{ 2\right\} $} & \multicolumn{1}{c}{$q=1\left\{ -2,-1\right\} $} & \multicolumn{1}{c}{$\cdots$} & \multicolumn{1}{c}{$q=1\left\{ 1,2\right\} $} & \multicolumn{1}{c}{$\cdots$} & \multicolumn{1}{c}{$q=2\left\{ 1,2\right\} $} & $\boxed{\D}$\tabularnewline
\end{tabular}
\par\end{center}

\noindent There are $(2^{5}-2)/2=15$ distinct dichotomizations of
the set $\left\{ -2,-1,0,1,2\right\} $, and the 15 subsets $W$ in
$D_{qW}^{c}$ should be chosen to avoid duplication, such as in $D_{q\left\{ 0,1\right\} }^{c}$
and $D_{q\left\{ -2,-1,2\right\} }^{c}$. Once duplication is prevented,
however, all splits of all rvs one is interested in should be included.
It is irrelevant that some of them can be presented as functions of
the others. In fact, any split of our $R_{q}^{c}$ can be presented
as a function of just three splits, chosen, e.g., as
\[
D_{q'}^{c}=D_{q\left\{ -1,1\right\} }^{c},D_{q''}^{c}=D_{q\left\{ 0,1\right\} }^{c},D_{q'''}^{c}=D_{q\left\{ 2\right\} }^{c}.
\]
It is easy to show, however, that in the subsystem
\begin{center}
\begin{tabular}{|c|c|c||c|c}
\cline{1-4} 
$D_{1'}^{1}$$ $ & $D_{1''}^{1}$$ $ & $D_{1'''}^{1}$$ $ & $f\left(D_{1'}^{1},D_{1''}^{1},D_{1'''}^{1}\right)$$ $ & $c=1$\tabularnewline
\cline{1-4} 
$D_{1'}^{2}$$ $ & $D_{1''}^{2}$$ $ & $D_{1'''}^{2}$$ $ & $f\left(D_{1'}^{2},D_{1''}^{2},D_{1'''}^{2}\right)$$ $ & $c=2$\tabularnewline
\cline{1-4} 
\multicolumn{1}{c}{$q=1'$} & \multicolumn{1}{c}{$q=1''$} & \multicolumn{1}{c}{$q=1'''$} & \multicolumn{1}{c}{$q^{*}$} & $\boxed{\D'}$\tabularnewline
\end{tabular}
\par\end{center}

\noindent of the system $\D$, the $f$-transformation of the maximal
couplings of the first three columns, since these couplings are not
jointly distributed, would not determine the coupling of the fourth
column, let alone ensure that this coupling is maximal.

There is no general prescription as to which rvs should or should
not be included in the system representing an empirical set of measurements:
what one includes (e.g., what coarsenings of the rvs already in play
one considers) reflects what aspects of the empirical situation one
is interested in. Once a set of rvs is chosen, however, we uniquely
form their splits and place them in a canonical system. 

The remainder of the paper is organized as follows. In Section \ref{sec: Formal-Theory},
we present the abstract version of CbD applicable to all possible
systems of categorical (and not only categorical) rvs. In Section
\ref{sec: Split-and-Canonical}, we formalize the idea of representing
any system of rvs by their splits and applying contextuality analysis
to these representations only. In Section \ref{sec: A-systematic-study},
we investigate the representation of all coarsenings of a single pair
of conteNt-sharing rvs by all possible splits. In the concluding section
we explain why one might wish to consider only some rather than all
possible splits. 
\begin{rem}
\label{rem: The-proofs-of}The proofs of the formal propositions in
the paper, unless obvious or referenced as presented elsewhere, are
given in the supplementary file S, together with additional theorems
and examples. 
\end{rem}

\section{\label{sec: Formal-Theory}Formal Theory of Contextuality}

\subsection{\label{subsec: Basic-notions}Basic notions}

The definition of a system of rvs requires two nonempty finite sets,
a set of \emph{conteNts} $Q$ and a set of \emph{conteXts} $C$. There
is a relation 
\begin{equation}
\Yleft\subseteq Q\times C,
\end{equation}
such that the projections of $\Yleft$ into $Q$ and $C$ equal $Q$
and $C$, respectively (this means that for every $q\in Q$ there
is a $c\in C$, and vice versa, such that $q\Yleft c$). We read both
$q\Yleft c$ and $c\Yright q$ as ``$q$ \emph{is measured in} $c$.'' 

A categorical rv is one with a finite set of values and its power
set as the codomain sigma-algebra. A system of (categorical) rvs is
a double-indexed set (we use calligraphic letters for sets of random
variables)
\begin{equation}
\R=\left\{ R_{q}^{c}:q\in Q,c\in C,q\Yleft c\right\} ,\label{eq: cc-system}
\end{equation}
such that (i) any $R_{q}^{c}$ and $R_{q}^{c'}$ have the same set
of possible values; (ii) $R_{q}^{c}$ and $R_{q'}^{c'}$ are jointly
distributed if $c=c'$; and (iii) if $c\not=c'$, $R_{q}^{c}$ and
$R_{q'}^{c'}$ are \emph{stochastically unrelated} (possess no joint
distribution). For any $c\in C$ the subset 
\begin{equation}
\R^{c}=\left\{ R_{q}^{c}:q\in Q,q\Yleft c\right\} =R^{c}
\end{equation}
of $\R$ is called a \emph{bunch} (of rvs) corresponding to $c$.
Since the elements of a bunch are jointly distributed, the bunch is
a (categorical) rv in its own right, so it can be also written as
$R^{c}$. Note that we do not distinguish the representations of $\R$
as (\ref{eq: cc-system}) and as 
\begin{equation}
\R=\left\{ R^{c}:c\in C\right\} .
\end{equation}
(See Refs. \cite{DzhafarovKujala(2016)Context-Content,DzhafarovKujala(2016)Fortschritte}
for a detailed probabilisitic analysis.) 

For any $q\in Q$, the subset 
\begin{equation}
\R_{q}=\left\{ R_{q}^{c}:c\in C,q\Yleft c\right\} 
\end{equation}
of $\R$ is called a \emph{connection} (between the bunches of rvs)
corresponding to $q$. Any two elements of a connection are stochastically
unrelated, so it is not an rv. 

\subsection{\label{subsec: General-definition-of-contextuality}General definition
of (non)contextuality}

A (probabilistic) \emph{coupling }$Y$ of a set of rvs $\left\{ X_{1},\ldots,X_{n}\right\} $
is a set of jointly distributed $\left\{ Y_{1},\ldots,Y_{n}\right\} $
such that $Y_{i}\sim X_{i}$ for $i=1,\ldots,n$. The tilde $\sim$
stands for ``has the same distribution as.'' 

An (overall) coupling $S$ of a system $\R$ in (\ref{eq: cc-system})
is a coupling of its bunches. That is, it is an rv
\begin{equation}
S=\left\{ S^{c}:c\in C\right\} 
\end{equation}
(with jointly distributed components) such that $S^{c}\sim R^{c}$,
for any $c\in C$. This implies that 
\begin{equation}
S^{c}=\left\{ S_{q}^{c}:q\in Q,q\Yleft c\right\} 
\end{equation}
is a set of jointly distributed rvs in a one-to-one correspondence
with the identically labeled elements of $\R$.

For a given $q\in Q$, a coupling $T_{q}$ of a connection $\R_{q}$
is an rv 
\begin{equation}
T_{q}=\left\{ T_{q}^{c}:c\in C,q\Yleft c\right\} 
\end{equation}
such that $T_{q}^{c}\sim R_{q}^{c}.$ In particular, if $S$ is a
coupling of $\R$, then 
\begin{equation}
S_{q}=\left\{ S_{q}^{c}:c\in C,q\Yleft c\right\} 
\end{equation}
is a coupling of $\R_{q}$, for any $q\in Q$. 
\begin{defn}
Given a set $\T=\left\{ T^{c}:c\in C\right\} $ of couplings for all
connections in a system $\R$, the system is said to be \emph{noncontextual
with respect to} $\T$ if $\R$ has a coupling $S$ with $S_{q}\sim T_{q}$
for any $q\in Q$. Otherwise $\R$ is said to be \emph{contextual
with respect to} $\T$. 
\end{defn}
Put differently, $\R$ is noncontextual with respect to $\T$ if and
only if there is a jointly distributed set 
\begin{equation}
S=\left\{ S_{q}^{c}:q\in Q,c\in C,q\Yleft c\right\} ,
\end{equation}
such that, for every $c\in C$, $S^{c}\sim R^{c}$, and for every
$q\in Q$, $S_{q}\sim T_{q}$. A coupling $S$ with this property
is called $\T$-\emph{connected}. 

If the couplings $T_{q}$ are characterized by some property $\C$
such that one and only one coupling $T_{q}$ satisfies this property
for any given connection $\R_{q}$, then the definition can be rephrased
as follows: 
\begin{defn}
$\R$ is said to be \emph{noncontextual with respect to} \emph{property}
$\C$ if it has a $\C$-\emph{connected} coupling $S$, defined as
one with $S_{q}$ satisfying $\C$ for any $q\in Q$. Otherwise $\R$
is said to be \emph{contextual with respect to} $\C$.
\end{defn}
\begin{rem}
\label{rem: multimaximally connected}In Section \ref{subsec: Multimaximality-for-splits}
we will use the property of \emph{(multi)maximality} to play the role
of $\C$, and the couplings in question then are referred to as \emph{(multi)maximally-connected}.
\end{rem}

\subsection{\label{subsec: Degree-of-contextuality}Degree of contextuality}

A \emph{quasi-distribution} on a finite set $V$ is a function $V\rightarrow\mathbb{R}$
(real numbers) such that the numbers assigned to the elements of $V$
sum to 1. We will refer to these numbers as \emph{quasi-probability
masses}. A \emph{quasi-rv} $X$ is defined analogously to an rv but
with a quasi-distribution instead of a distribution.

A \emph{quasicoupling} $X$ of $\R$ is defined as a quasi-rv 
\begin{equation}
X=\left\{ X_{q}^{c}:q\in Q,c\in C,q\Yleft c\right\} ,
\end{equation}
such that $X^{c}\sim R^{c}$ for every $c\in C$. We have the following
results.
\begin{thm}[\cite{DzhafarovKujala(2016)Context-Content} Theorem 6.1]
\label{thm: quasi-couplings}For any system $\R$ and any set $\T$
of couplings for the connections of $\R$, there is a quasi-coupling
$X$ of $\R$ such that $X_{q}=\left\{ X_{q}^{c}:c\in C,q\Yleft c\right\} \sim T_{q}$
for any $q\in Q$. 
\end{thm}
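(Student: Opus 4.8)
The theorem asserts that given any system $\R$ and any choice $\T=\{T_q:q\in Q\}$ of couplings for its connections, one can always find a *quasi*-coupling $X$ of $\R$ whose column-marginals $X_q$ reproduce the prescribed $T_q$ exactly. The key relaxation from the (non)contextuality definition is that $X$ need only be a quasi-distribution — its masses may be negative, as long as they sum to $1$. The plan is to translate the existence question into the solvability of a linear system over $\mathbb{R}$ (rather than over the nonnegative orthant, where it could fail) and then exhibit a solution directly.

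**The approach.** First I would set up the ambient space. All rvs are categorical, so fix for each $q$ the common finite value set of $R_q^c$, and let $S$ range over the product space $V=\prod_{q\in Q}(\text{values of }q)$, i.e. the set of possible joint outcomes of a putative overall coupling. A quasi-coupling $X$ is then a real-valued function $x:V\to\mathbb{R}$ with $\sum_{v\in V}x(v)=1$, and the requirements ``$X^c\sim R^c$ for every $c$'' and ``$X_q\sim T_q$ for every $q$'' are each a finite collection of \emph{linear} marginalization equations on the vector $x$: summing $x$ over the coordinates outside a bunch (resp. outside a connection) must match the given bunch distribution $R^c$ (resp. the given connection coupling $T_q$). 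So the whole statement reduces to: \emph{this linear system $Ax=b$ has a real solution.}

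**The heart of the argument.** The crucial consistency fact is that the prescribed marginals are \emph{mutually compatible on overlaps}: for each content $q$ measured in context $c$, the one-dimensional marginal of the bunch distribution $R^c$ on coordinate $q$ coincides with the one-dimensional marginal of the connection coupling $T_q^c$ on coordinate $q$, precisely because $T_q^c\sim R_q^c$ and $R_q^c$ is the marginal of $R^c$. This is exactly the compatibility needed for a \emph{signed} (quasi-probability) extension to exist. The cleanest route is to invoke a linear-algebra / duality criterion: the system $Ax=b$ is solvable over $\mathbb{R}$ if and only if every linear combination of the rows of $A$ that annihilates the left-hand side also annihilates $b$; equivalently, $b$ lies in the column space of $A$. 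I would verify this by checking that every linear dependency among the marginalization constraints is an ``overlap identity'' — i.e. it equates two different ways of computing the same lower-dimensional marginal — and that each such identity is automatically satisfied by the compatible data $\{R^c\}$ and $\{T_q\}$ above. Since there are no \emph{nonnegativity} constraints to obstruct us, no Farkas-type infeasibility can arise, and a real solution exists.

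**Main obstacle.** The genuine difficulty is bookkeeping the kernel of the constraint matrix: I must argue that \emph{all} linear relations among the bunch- and connection-marginalization equations are generated by overlap identities of the kind just described, so that compatibility on the shared single-content marginals is not merely necessary but sufficient. The fact that any two constraints overlap only in content-marginals (two bunches share their common contents, a bunch and a connection share exactly the single coordinate $q$) keeps this overlap structure low-dimensional and should make the dependency analysis tractable; a convenient way to package it is to build $x$ explicitly — e.g. start from the product-of-marginals quasi-distribution and additively correct it to meet each constraint, using that corrections living in the kernel of the marginalization maps can be superposed without reintroducing negativity concerns (there being none). I expect the proof in the referenced source to proceed by such an explicit construction or by the dual solvability criterion; either way the signed relaxation is what makes the statement hold in full generality, in contrast to genuine (nonnegative) couplings, which need not exist.
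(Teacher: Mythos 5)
Your proposal has two problems, one of which is fatal as written. The ambient space is wrong: you let the putative quasi-coupling live on $V=\prod_{q\in Q}(\textnormal{values of }q)$, one coordinate per conteNt --- the classical hidden-variable space. But in CbD a (quasi-)coupling of $\R$ is a (quasi-)rv with one coordinate per \emph{element of the system}, i.e., indexed by the pairs $(q,c)$ with $q\Yleft c$. In your space the requirement $X_{q}\sim T_{q}$ is not even expressible as a marginalization constraint: $T_{q}$ is a joint distribution of the several rvs $T_{q}^{c}$, $c\Yright q$, whereas your space offers a single $q$-coordinate to marginalize onto. Your parenthetical remark that ``two bunches share their common contents'' confirms the misindexing: in the correct, pair-indexed space two bunches are disjoint, two connections are disjoint, and a bunch and a connection intersect in exactly the one coordinate $(q,c)$. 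The distinction is not cosmetic. With content-indexed coordinates the bunch constraints alone are already unsolvable (even by signed measures) whenever the system is inconsistently connected, $R_{q}^{c}\not\sim R_{q}^{c'}$ --- a case CbD expressly allows; the theorem is unconditional precisely because couplings are pair-indexed. Second, even after correcting the space, the claim that carries the entire content of the theorem --- that every linear dependency among the bunch- and connection-marginalization constraints is generated by overlap identities --- is deferred (``I would verify\ldots'', ``I expect\ldots'') rather than proved; in the general overlapping-marginals setting it is essentially the Abramsky--Brandenburger result \cite{AbramskyBrandenburger(2011)} that no-signalling families admit signed global sections, and it needs an argument.

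Note that this paper gives no proof of the statement; per Remark \ref{rem: The-proofs-of} it defers to Ref.~\cite{DzhafarovKujala(2016)Context-Content} (Theorem 6.1), where the result is established by an explicit construction that the CbD index structure makes immediate, and which is exactly the ``product-of-marginals plus kernel corrections'' idea you gesture at in your last paragraph, carried out. Since the bunches partition the coordinate set $\left\{ (q,c):q\Yleft c\right\}$, and so do the connections, and so do the singletons, the following three product distributions on the pair-indexed joint space are well defined: $A$, the product over $c\in C$ of the distributions of the bunches $R^{c}$; $B$, the product over $q\in Q$ of the distributions of the prescribed couplings $T_{q}$; and $D$, the product over all $(q,c)$ of the distributions of the individual $R_{q}^{c}$. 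Put $X=A+B-D$. Its masses sum to $1+1-1=1$. Its marginal on the bunch $c$ is the distribution of $R^{c}$, plus $\bigotimes_{q\Yleft c}\left(\textnormal{distribution of }T_{q}^{c}\right)$, minus $\bigotimes_{q\Yleft c}\left(\textnormal{distribution of }R_{q}^{c}\right)$, and the last two terms cancel because $T_{q}^{c}\sim R_{q}^{c}$ (this is where the coupling property of $T_{q}$ enters, the one consistency fact you correctly identified). Symmetrically, its marginal on the connection $q$ is the distribution of $T_{q}$. So $X$ is the required quasi-coupling. Completing your duality route would instead require classifying the kernel of the adjoint of the marginalization map --- precisely the hard step you left undone, and substantially more work than the three-line verification above.
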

The \emph{total variation} of $X$ is denoted by $\left\Vert X\right\Vert $
and defined as the sum of the absolute values of the quasi-probability
masses assigned to all values of $X$. 
\begin{thm}[\cite{DzhafarovKujala(2016)Context-Content} Section 6.3]
The total variation $\left\Vert X\right\Vert $ reaches its minimum
in the class of all quasi-couplings $X$ satisfying the conditions
of Theorem \ref{thm: quasi-couplings}.
\end{thm}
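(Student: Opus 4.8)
The plan is to recognize the assertion as a statement about attainment of a minimum, and to prove it by a standard finite-dimensional compactness argument. The objective $\left\Vert X\right\Vert$ is the $\ell^{1}$-norm of the vector of quasi-probability masses, the admissible class is cut out by finitely many linear equalities, and the whole problem lives in a finite-dimensional real vector space; so the only thing that needs care is that this admissible class is unbounded.

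First I would fix the ambient space. Let $V$ denote the finite set of possible joint values of the overall quasi-coupling, i.e., the Cartesian product of the value sets of the $S_{q}^{c}$. A quasi-coupling $X$ is then identified with its vector of quasi-probability masses $x\in\mathbb{R}^{V}$, and $\left\Vert X\right\Vert =\sum_{v\in V}\left|x(v)\right|$ is exactly the $\ell^{1}$-norm on this finite-dimensional space. Next I would describe the admissible set $A$ as the solution set of a linear system: the conditions that the masses sum to $1$, that each bunch marginal satisfies $X^{c}\sim R^{c}$, and that each connection marginal satisfies $X_{q}\sim T_{q}$, are all affine equalities in the coordinates $x(v)$ (each prescribes that a fixed sum of selected coordinates equals a fixed number). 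Hence $A=\left\{ x\in\mathbb{R}^{V}:Mx=b\right\}$ for a suitable matrix $M$ and vector $b$, so $A$ is a closed (indeed affine) subset of $\mathbb{R}^{V}$, and by Theorem \ref{thm: quasi-couplings} it is nonempty.

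The core step is then the compactness argument. Set $m=\inf_{x\in A}\left\Vert x\right\Vert \geq 0$, which is finite because $A\neq\emptyset$. Choosing a minimizing sequence $x_{n}\in A$ with $\left\Vert x_{n}\right\Vert \to m$ and discarding finitely many terms, I may assume $\left\Vert x_{n}\right\Vert \leq m+1$ for all $n$, so the sequence lies in $A\cap\left\{ x:\left\Vert x\right\Vert \leq m+1\right\}$, which is closed and bounded, hence compact by Heine--Borel. Passing to a convergent subsequence $x_{n_{k}}\to x^{*}$, closedness of $A$ gives $x^{*}\in A$, and continuity of the norm gives $\left\Vert x^{*}\right\Vert =\lim_{k}\left\Vert x_{n_{k}}\right\Vert =m$; thus the infimum is attained at $x^{*}$, which is the required minimizing quasi-coupling.

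I do not expect a genuine obstacle here. The one point requiring attention is that $A$ is unbounded: quasi-probability masses may be arbitrarily large in magnitude provided they cancel, so one cannot directly invoke compactness of $A$ itself. The remedy is precisely to restrict to a bounded sublevel set of the norm, which remains closed and therefore compact. Alternatively, and in keeping with the paper's linear-programming framing, one may substitute $x(v)=u(v)-w(v)$ with $u(v),w(v)\geq 0$ and minimize $\sum_{v}\bigl(u(v)+w(v)\bigr)$ subject to the same linear constraints; this is a feasible linear program whose objective is bounded below by $0$, so by the fundamental theorem of linear programming an optimal solution exists and yields the minimizer. Either route establishes the claim.
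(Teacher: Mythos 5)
Your proof is correct: the admissible set is a nonempty (by the preceding theorem) closed affine subset of a finite-dimensional space, and restricting to a sublevel set of the $\ell^{1}$-norm (or, equivalently, passing to the standard linear-programming reformulation with $x=u-w$, $u,w\geq 0$) yields attainment of the minimum. The paper itself gives no proof, citing Section 6.3 of Ref.~\cite{DzhafarovKujala(2016)Context-Content}, where the result is established by exactly this linear-programming argument, so your approach is essentially the same as the source's.
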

If $\min\left\Vert X\right\Vert $ is 1, then all quasi-probability
masses are nonnegative, and the system $\R$ is noncontextual with
respect to $\T$. If $\min\left\Vert X\right\Vert >1$, then the system
is contextual with respect to $\T$, and $\min\left\Vert X\right\Vert -1$
can be taken as a (universally applicable) measure\emph{ }of the \emph{degree
of contextuality}.

\section{\label{sec: Split-and-Canonical}Splits and Canonical Representations}

\subsection{\label{subsec: Expansions}Expansions of the original system}

One is often interested not only in a system of empirically measured
rvs $\R$ but also in some transformations thereof. Each such a transformation
$F_{q_{1},\dots,q_{k}}$ is labeled by a set of conteNts, $q_{1},\dots,q_{k}$,
and it takes as its arguments the rvs $R_{q_{1}}^{c},\ldots,R_{q_{k}}^{c}$
in each conteXt $c$ such that $c\Yright q_{1},\ldots,q_{k}$. The
outcome, 
\begin{equation}
R_{q^{*}}^{c}=F_{q_{1},\dots,q_{k}}\left(R_{q_{1}}^{c},\ldots,R_{q_{k}}^{c}\right),
\end{equation}
is an rv interpreted as measuring a new conteNt $q^{*}$ in the conteXt
$c$. One is free to choose any such transformations and form the
corresponding new conteNts, as there can be no rules mandating what
one should be interested in measuring. 

Using various transformations to add new conteNts and new rvs to the
original system \emph{expands} it into a larger system. Two types
of expansions that are of particular interest are \emph{expansion-through-joining}
and \emph{expansion-through-coarsening}. Joining is defined as 
\begin{equation}
R_{q_{1}}^{c},\ldots,R_{q_{k}}^{c}\longmapsto\left(R_{q_{1}}^{c},\ldots,R_{q_{k}}^{c}\right)=R_{q'}^{c},
\end{equation}
whereas coarsening is transformation 
\begin{equation}
R_{q}^{c}\longmapsto F_{q}\left(R_{q}^{c}\right)=R_{q''}^{c}.
\end{equation}
In fact any other transformation $F_{q_{1},\dots,q_{k}}\left(R_{q_{1}}^{c},\ldots,R_{q_{k}}^{c}\right)$
can be presented as joining followed by coarsening. 
\begin{example}[Joining]
\label{exa: Joining}Consider the system

\begin{center}

\begin{tabular}{|c|c|c|c}
\cline{1-3} 
$R_{1}^{1}$$\ensuremath{}$ & $R_{2}^{1}$$\ensuremath{}$ & $\cdot$$\ensuremath{}$ & $c=1$\tabularnewline
\cline{1-3} 
$R_{1}^{2}$$\ensuremath{}$ & $R_{2}^{2}$$\ensuremath{}$ & $\cdot$$\ensuremath{}$ & $c=2$\tabularnewline
\cline{1-3} 
$R_{1}^{3}$$\ensuremath{}$ & $\cdot$$\ensuremath{}$ & $R_{3}^{3}$$\ensuremath{}$ & $c=3$\tabularnewline
\cline{1-3} 
$\cdot$$\ensuremath{}$ & $R_{2}^{4}$$\ensuremath{}$ & $R_{3}^{4}$$\ensuremath{}$ & $c=4$\tabularnewline
\cline{1-3} 
\multicolumn{1}{c}{$q=1$} & \multicolumn{1}{c}{$q=2$} & \multicolumn{1}{c}{$q=3$} & $\boxed{\R}$\tabularnewline
\end{tabular}.

\end{center}

\noindent It contains the jointly distributed $R_{1}^{1},R_{2}^{1}$
and also the jointly distributed $R_{1}^{2},R_{2}^{2}$, but in determining
the maximal couplings of $R_{1}^{1},R_{1}^{2}$ and of $R_{2}^{1},R_{2}^{2}$
in the first and second columns these row-wise joints are not utilized.
In some applications this would be unacceptable (e.g., in the theory
of selective influences \cite{TNHMP,DK2012JMP} and in the approach
advocated by Abramsky and colleagues \cite{AbramskyBarbosaKishidaLalMansfield(2015),AbramskyBrandenburger(2011)}
this is never acceptable), and then the following expansion has to
be used:

\begin{center}

\begin{tabular}{|c|c|c|c|r}
\cline{1-4} 
$R_{1}^{1}$$ $ & $R_{2}^{1}$$ $ & $\cdot$$ $ & $\left(R_{1}^{1},R_{2}^{1}\right)$ & $c=1$\tabularnewline
\cline{1-4} 
$R_{1}^{2}$$ $ & $R_{2}^{2}$$ $ & $\cdot$$ $ & $\left(R_{1}^{2},R_{2}^{2}\right)$ & 2\tabularnewline
\cline{1-4} 
$R_{1}^{3}$$ $ & $\cdot$$ $ & $R_{3}^{3}$$ $ & $\cdot$$ $ & 3\tabularnewline
\cline{1-4} 
$\cdot$$ $ & $R_{2}^{4}$$ $ & $R_{3}^{4}$$ $ & $\cdot$$ $ & 4\tabularnewline
\cline{1-4} 
\multicolumn{1}{c}{$q=1$} & \multicolumn{1}{c}{2} & \multicolumn{1}{c}{3} & \multicolumn{1}{c}{12} & $\boxed{\R^{*}}$\tabularnewline
\end{tabular}.$\square$

\end{center}
\end{example}
\begin{example}[Coarsening]
\label{exa: coarsening} If $V$ is a set of possible values of $R_{q}^{c}$,
then $U=F_{q}\left(V\right)$ is the set of possible values of the
rv $R_{q^{*}}^{c}=F_{q}\left(R_{q}^{c}\right).$ This rv is a coarsening
of $R_{q}^{c}$. Note that any rv is its own coarsening. Since the
way one labels the values of $U$ is usually irrelevant, each such
function $F_{q}$ can be presented as a partition of $V$. Consider,
e.g., the ``mini''-system

\begin{center}

\begin{tabular}{|c|c}
\cline{1-1} 
$R_{q}^{1}$$\ensuremath{}$ & $c=1$\tabularnewline
\cline{1-1} 
$R_{q}^{2}$$\ensuremath{}$ & $c=2$\tabularnewline
\cline{1-1} 
\multicolumn{1}{c}{$q$} & $\boxed{\R}$\tabularnewline
\end{tabular},

\end{center}

\noindent and let the two rvs take values on $\left\{ 1,2,3,4,5\right\} $.
If these values are considered ordered, $1<\ldots<5$, one may be
interested in all possible partitions of $\left\{ 1,2,3,4,5\right\} $
into subsets of consecutive numbers, such as $\left\{ 12\:|\:34\:|\:5\right\} $,
$\left\{ 1\:|\:2345\right\} $, etc. There are 15 such partitions
(counting $\left\{ 1\:|\:2\:|\:3\:|\:4\:|\:5\right\} $ that defines
the original rvs $R_{q}^{c}$, but excluding the trivial partition
$\left\{ 12345\right\} $). If the values $1,2,3,4,5$ are treated
as unordered labels, one might consider all possible nontrivial partitions,
such as $\left\{ \left\{ 14\right\} ,\left\{ 25\right\} ,\left\{ 3\right\} \right\} $,
$\left\{ \left\{ 145\right\} ,\left\{ 23\right\} \right\} $, etc.
There are 51 such partitions. In either of these two coarsening schemes
the partitions can be ordered in some way, and the respective expanded
systems then become

\begin{center}

\begin{tabular}{|c|c|c|c|c}
\cline{1-4} 
$R_{q}^{1}$ & $R_{q1'}^{1}$ & $\cdots$ & $R_{q14'}^{1}$ & $c=1$\tabularnewline
\cline{1-4} 
$R_{q}^{2}$ & $R_{q1'}^{2}$ & $\cdots$ & $R_{q14'}^{2}$$\ensuremath{}$ & $c=2$\tabularnewline
\cline{1-4} 
\multicolumn{1}{c}{$q$} & \multicolumn{1}{c}{$q1'$} & \multicolumn{1}{c}{$\cdots$} & \multicolumn{1}{c}{$q14'$} & $\boxed{\R'}$\tabularnewline
\end{tabular} and$\quad$%
\begin{tabular}{|c|c|c|c|c}
\cline{1-4} 
$R_{q}^{1}$ & $R_{q1''}^{1}$ & $\cdots$ & $R_{q50''}^{1}$ & $c=1$\tabularnewline
\cline{1-4} 
$R_{q}^{2}$ & $R_{q1''}^{2}$ & $\cdots$ & $R_{q50''}^{2}$ & $c=2$\tabularnewline
\cline{1-4} 
\multicolumn{1}{c}{$q$} & \multicolumn{1}{c}{$q1''$} & \multicolumn{1}{c}{$\cdots$} & \multicolumn{1}{c}{$q50''$} & $\boxed{\R''}$\tabularnewline
\end{tabular}

\end{center}
\end{example}
\begin{rem}
\label{rem: support small}Although the number of the states (combinations
of the values of the elements) of the bunch $R^{c}$ in $\R'$ and
especially in $\R^{''}$ is very large, the support of each bunch
(the set of the states with nonzero probabilities) has the same size
as that of the initial random variable $R_{q}^{c}$ in $\R$ (i.e.,
in our example, it cannot exceed 5). This follows from the facts that
each event $R_{q}^{c}=x$ uniquely defines the state of $R^{c}$ in
$\R'$ and in $\R^{''}$, and that $\sum_{x}\Pr\left[R_{q}^{c}=x\right]=1$.
\hfill$\square$
\end{rem}

\subsection{\label{subsec: Dichotomizations-and-splits}Dichotomizations and
canonical/split representations}
\begin{defn}
A \emph{dichotomization} of a set $V$ is a function $f:V\rightarrow\left\{ 0,1\right\} $.
Applying such an $f$ to an rv $R$ with the set of possible values
$V$, we get a binary rv $f\left(R\right)$. We call this $f\left(R\right)$
a \emph{split} of the original $R$. 
\end{defn}
If $R_{q}^{c}$ is an element of a system $\R$, let us agree to identify
$f\left(R_{q}^{c}\right)$ as $D_{qW}^{c}$, where $W=f^{-1}\left(1\right)$,
with the understanding that $D_{qW}^{c}$ and $D_{q\left(V-W\right)}^{c}$
are indistinguishable. To make the choice definitive, we always choose
$W$ as the smaller of $W$ and $V-W$; in the case they have the
same number of elements, we order the elements of $V$, say $1<2<\ldots<k$,
and then choose $W$ as lexicographically preceding $V-W$.

With $V=\left\{ 1,2,\ldots,k\right\} $, the jointly distributed set
of splits 
\begin{equation}
\left\{ D_{q\left\{ 1\right\} }^{c},D_{q\left\{ 2\right\} }^{c},\ldots,D_{q\left\{ k\right\} }^{c}\right\} 
\end{equation}
 is called the \emph{split representation} of $R_{q}^{c}$. If $k=2$,
then $R_{q}^{c}$ is its own split representation, because $D_{q\left\{ 1\right\} }^{c}$
and $D_{q\left\{ 2\right\} }^{c}$ are indistinguishable. 
\begin{defn}
The system $\D$ obtained from a system $\R$ by replacing each of
its elements by its split representations is called the \emph{canonical
}(or \emph{split})\emph{ representation} of $\R$.
\end{defn}
\begin{example}[continuing Example \ref{exa: Joining}]
\label{exa: Joining splits} Let all rvs in $\R$ be binary, $0/1$,
whence $\left(R_{1}^{1},R_{2}^{1}\right)$ and $\left(R_{1}^{2},R_{2}^{2}\right)$
in $\R^{*}$ have 4 values each: $00,01,10,11$. Replacing them with
the split representations and observing that the first three columns
do not change, we get the following canonical representation of $\R^{*}$:

\begin{center}

\begin{tabular}{|c|c|c|c|c|c|c|r}
\cline{1-7} 
$D_{1}^{1}=R_{1}^{1}$$ $ & $D_{2}^{1}=R_{2}^{1}$$ $ & $\cdot$$ $ & $D_{12\left\{ 00\right\} }^{1}$ & $D_{12\left\{ 01\right\} }^{1}$ & $D_{12\left\{ 10\right\} }^{1}$ & $D_{12\left\{ 11\right\} }^{1}$ & $c=1$\tabularnewline
\cline{1-7} 
$D_{1}^{2}=R_{1}^{2}$$ $ & $D_{2}^{2}=R_{2}^{2}$$ $ & $\cdot$$ $ & $D_{12\left\{ 00\right\} }^{2}$ & $D_{12\left\{ 01\right\} }^{2}$ & $D_{12\left\{ 10\right\} }^{2}$ & $D_{12\left\{ 11\right\} }^{2}$ & $2$\tabularnewline
\cline{1-7} 
$D_{1}^{3}=R_{1}^{3}$$ $ & $\cdot$$ $ & $D_{3}^{3}=R_{3}^{3}$$ $ & $\cdot$$ $ & $\cdot$$ $ & $\cdot$$ $ & $\cdot$$ $ & $3$\tabularnewline
\cline{1-7} 
$ $ & $D_{2}^{4}=R_{2}^{4}$$ $ & $D_{3}^{4}=R_{3}^{4}$$ $ & $\cdot$$ $ & $\cdot$$ $ & $\cdot$$ $ & $\cdot$$ $ & $4$\tabularnewline
\cline{1-7} 
\multicolumn{1}{c}{$q=1$} & \multicolumn{1}{c}{$2$} & \multicolumn{1}{c}{$3$} & \multicolumn{1}{c}{$12\left\{ 00\right\} $} & \multicolumn{1}{c}{$12\left\{ 01\right\} $} & \multicolumn{1}{c}{$12\left\{ 10\right\} $} & \multicolumn{1}{c}{$12\left\{ 11\right\} $} & $\boxed{\D^{*}}$\tabularnewline
\end{tabular}.$\square$

\end{center}
\end{example}
\begin{example}[continuing Example \ref{exa: coarsening}]
\label{exa: coarsening splits} For the system $\R'$, it is clear
that the split representations of the 15 coarsenings of $R_{q}^{c}$
multiply overlap: e.g., $D_{q\left\{ 3\right\} }^{1}$ belongs to
the split representations of $R_{q}^{1}$ and of the coarsenings defined
by the partitions $\left\{ 12\:|\:3\:|\:45\right\} $, $\left\{ 1\:|\:2\:|\:3\:|\:45\right\} $,
and $\left\{ 12\:|\:3\:|\:4\:|\:5\right\} $. Following our rules,
$W$ in the splits $D_{qW}^{c}$ comprising the split representation
of $\R'$ are (when written as strings) $1,2,3,4,5,12,23,34,45,$
and $15$ (note that, e.g., the split of the coarsening \{1|23|4|5\}
with $W=\{1,23\}$ should be denoted $D_{q\{1,23\}}^{1}$ according
to our definitions, but this is the same random variable as $D_{q\{45\}}^{1}$
which we have included in the list). For the system $\R''$ the canonical
representation, obviously, consists of all possible splits of $R_{q}^{c}$.
It will be the target of the analysis presented in Section \ref{sec: A-systematic-study}.\hfill$\square$
\end{example}

\subsection{\label{subsec: Multimaximality-for-splits}Multimaximality for canonical
representations}

If each connection in a canonical representation $\D$ contains just
two rvs, one can compute unique maximal couplings for all of these
connections. The determination of whether $\D^{*}$ is (non)contextual
then can proceed in compliance with the general theory presented in
Section \ref{subsec: General-definition-of-contextuality}, and amounts
to determining if $\D^{*}$ has a \emph{maximally-connected} coupling
$S$ (see Remark (\ref{rem: multimaximally connected})). If no such
coupling exists, the computation of the degree of contextuality in
$\D^{*}$ can be done in compliance with Section \ref{subsec: Degree-of-contextuality}.

In a more general case, however, with an arbitrary number of rvs in
each connection, maximal couplings should be replaced with computing
what we call \emph{multimaximal couplings} \cite{DzhafarovKujala(2016)Fortschritte,DzhafarovKujala(2017)LNCS}.
\begin{defn}
A coupling $T_{q}$ of a connection $\D_{q}$ of a split representation
$\D$ is called \emph{multimaximal} if, for any $c,c'\in C$ such
that $c,c'\Yright q$, $\Pr\left[T_{q}^{c}=T_{q}^{c'}\right]$ is
maximal over all possible couplings of $\D_{q}$. (If the connection
contains two rvs, its multimaximal coupling is simply maximal.)
\end{defn}
A multimaximal coupling is known to have the following properties.
\begin{description}
\item [{Multimax1:}] The multimaximal coupling exists and is unique for
any connection $\D_{q}$ (\cite{DzhafarovKujala(2017)LNCS} Corollary
1).
\item [{Multmax2:}] $T_{q}$ is a multimaximal coupling of $\D_{q}$ if
and only if any subset of $T_{q}$ is a maximal coupling for the corresponding
subset of $\D_{q}$ (\cite{DzhafarovKujala(2017)LNCS} Theorem 5;
\cite{DzhafarovKujala(2016)Fortschritte} Theorem 2.3).
\item [{Multimax3:}] In a connection $\D_{q}$, if $\left\{ c_{1},\ldots,c_{n}\right\} $
is the set of all $c\Yright q$ enumerated so that 
\[
\Pr\left[D_{q}^{c_{1}}=1\right]\leq\ldots\leq\Pr\left[D_{q}^{c_{n}}=1\right],
\]
then $T_{q}$ is a multimaximal coupling of $\D_{q}$ if and only
if $\Pr\left[T_{q}^{c_{i}}=T_{q}^{c_{i+1}}\right]$ is maximal for
$i=1,\ldots,n-1$, over all possible couplings of $\D_{q}$ (\cite{DzhafarovKujala(2016)Fortschritte}
Theorem 2.3).
\end{description}

\section{\label{sec: A-systematic-study}The Largest Canonical Representation
of a Two-Element Connection}

We consider here the case when one is interested in all possible coarsenings
of the rvs in a system. The canonical/split representation of the
system then contains all splits of all rvs. We will investigate in
detail a fragment of the original (expanded) system involving just
two $k$-valued rvs within a single connection:
\begin{center}
\begin{tabular}{|c|c}
\cline{1-1} 
$R_{1}^{1}$$\ensuremath{}$ & $c=1$\tabularnewline
\cline{1-1} 
$R_{1}^{2}$$\ensuremath{}$ & $c=2$\tabularnewline
\cline{1-1} 
\multicolumn{1}{c}{$q=1$} & $\boxed{\R}$\tabularnewline
\end{tabular}
\par\end{center}

\noindent The canonical system with all splits of these $k$-valued
rvs is
\begin{center}
\begin{tabular}{c|c|c|c|c|c}
\cline{2-5} 
$D^{1}:$ & $D_{W1}^{1}$$ $ & $D_{W2}^{1}$$ $ & $ $ & $D_{W\left(2^{k-1}-1\right)}^{1}$$ $ & $c=1$\tabularnewline
\cline{2-5} 
$D^{2}:$ & $D_{W1}^{2}$$ $ & $D_{W2}^{2}$$ $ & $\cdots$ & $D_{W\left(2^{k-1}-1\right)}^{2}$$ $ & $c=2$\tabularnewline
\cline{2-5} 
\multicolumn{1}{c}{} & \multicolumn{1}{c}{$q=W1$} & \multicolumn{1}{c}{$W2$} & \multicolumn{1}{c}{$\cdots$} & \multicolumn{1}{c}{$W\left(2^{k-1}-1\right)$} & $\boxed{\D}$\tabularnewline
\end{tabular}
\par\end{center}

\noindent where $W1$, $W2$, etc. are the subsets $f^{-1}\left(1\right)$
chosen as explained in Section \ref{subsec: Dichotomizations-and-splits}
from the $2^{k-1}-1$ distinct dichotomizations $f$ of $\left\{ 1,\ldots,k\right\} $.
The number $2^{k-1}-1$ is arrived at by taking the number of all
subsets, subtracting $2$ improper subsets, and dividing by 2 because
one chooses only one of $W$ and $\left\{ 1,2,\ldots,k\right\} -W$.
The goal is to determine whether $\D$ is contextual. If it is, then
any canonical system that includes $\D$ as its subsystem (i.e., represents
an original system with $\R$ as part of its connection) is contextual. 

The two original rvs have distributions 
\begin{equation}
\Pr\left[R_{1}^{1}=i\right]=p_{i},\;\Pr\left[R_{1}^{2}=i\right]=q_{i}\;,i=1,2,\ldots,k.\label{eq: constraints for two bunches}
\end{equation}
A state (or value) of a bunch in the system $\D$ is a vector of $2^{k-1}-1$
zeroes and ones. However, the support of each of the bunches in system
$\D$ consists of at most $k$ corresponding states, and we can enumerate
them by any $k$ symbols, say, $1,2,\ldots,k$, as in the original
variable:
\begin{equation}
\Pr\left[D^{1}=i\right]=p_{i},\;\Pr\left[D^{2}=i\right]=q_{i}\;,i=1,2,\ldots,k,
\end{equation}

\noindent As a result, $\D=\left\{ D^{1},D^{2}\right\} $ has $k^{2}$
possible states that we can denote $ij$, with $i,j\in\left\{ 1,2,\ldots,k\right\} $.
A coupling $S=\left(S_{q}^{1},S_{q}^{2}\right)$ of $\D$ assigns
probabilities 
\begin{equation}
r_{ij}=\Pr\left[S_{q}^{1}=i,S_{q}^{2}=j\right],\;i,j\in\left\{ 1,\ldots,k\right\} ,\label{eq: def of rij}
\end{equation}
to these $k^{2}$ states so that they satisfy $2k$ linear constraints
imposed by (\ref{eq: constraints for two bunches}),
\begin{equation}
\sum_{j=1}^{k}r_{ij}=p_{i},\;\sum_{i=1}^{k}r_{ij}=q_{j},\;i,j\in\left\{ 1,\ldots,k\right\} .\label{eq: bunch 1 of 1-2}
\end{equation}
If $S$ is maximally-connected, then it should also satisfy $2^{k-1}-1$
linear constraints imposed by the maximal couplings of the corresponding
connections. Specifically, if $W=\left\{ i_{1},\ldots,i_{m}\right\} \subset\left\{ 1,\ldots,k\right\} $,
then the maximal coupling $\left(S_{W}^{1},S_{W}^{2}\right)$ of $\left(D_{W}^{1},D_{W}^{2}\right)$
is distributed as 
\begin{equation}
\left.\begin{array}{c}
\Pr\left[S_{W}^{1}=1\right]=\Pr\left[D_{W}^{1}=1\right]=p_{i_{1}}+p_{i_{2}}+\ldots+p_{i_{m}}\\
\Pr\left[S_{W}^{2}=1\right]=\Pr\left[D_{W}^{2}=1\right]=q_{i_{1}}+q_{i_{2}}+\ldots+q_{i_{m}}\\
\Pr\left[S_{W}^{1}=S_{W}^{2}=1\right]=\min\left(p_{i_{1}}+p_{i_{2}}+\ldots+p_{i_{m}},q_{i_{1}}+q_{i_{2}}+\ldots+q_{i_{m}}\right)
\end{array}\right].\label{eq: maximal for connections}
\end{equation}
Let us use the term $m$-split to designate any split $D_{W}$ with
an $m$-element set $W$ ($m\leq k/2$). Thus, $D_{W}$ with $W=\left\{ i\right\} $
is a 1-split, with $W=\left\{ i,j\right\} $ it is a 2-split, and
the higher-order splits appear beginning with $k>5$. Theorem \ref{thm: higher-order splits}
and its corollaries below show that in determining whether the system
$\D$ is contextual one needs to consider only the 1-splits and 2-splits.
Let us use the term \emph{1-2 system} for this subsystem of $\D$.
An overall coupling $S$ of $\D$ contains as its part a maximally-connected
coupling of the 1-2 system if and only if the probabilities $r_{ij}$
in (\ref{eq: def of rij}) satisfy (\ref{eq: maximal for connections})
for $m=1$ and $m=2$:
\begin{equation}
r_{ii}=\min\left(p_{i},q_{i}\right),\;i\in\left\{ 1,\ldots,k\right\} \label{eq: 1 of 1-2}
\end{equation}
and
\begin{equation}
r_{ii}+r_{ij}+r_{ji}+r_{jj}=\min\left(p_{i}+p_{j},q_{i}+q_{j}\right),\;i,j\in\left\{ 1,\ldots,k\right\} ,i<j.\label{eq: 2 of 1-2}
\end{equation}
That is, a maximally-connected coupling of the 1-2 system is described
by the $3k+\binom{k}{2}$ linear equations (\ref{eq: bunch 1 of 1-2})-(\ref{eq: 1 of 1-2})-(\ref{eq: 2 of 1-2}).
We have therefore the following necessary condition for noncontextuality
of $\D$.
\begin{thm}
\label{thm: 1-2 =000023 constraints}If the system $\D$ is noncontextual,
then the $3k+\binom{k}{2}$ linear equations (\ref{eq: bunch 1 of 1-2})-(\ref{eq: 1 of 1-2})-(\ref{eq: 2 of 1-2})
are satisfied. 
\end{thm}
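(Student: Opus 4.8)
The plan is to show that the probabilities $r_{ij}$ arising from any maximally-connected overall coupling of $\D$ must satisfy the three families of equations; the statement is then immediate, since noncontextuality of $\D$ guarantees that such a coupling exists. First I would unpack the definition of noncontextuality together with Remark \ref{rem: multimaximally connected}: because every connection of $\D$ has exactly two elements, a multimaximal coupling is simply maximal, so "$\D$ is noncontextual" means there is an overall coupling $S$ of $\D$ each of whose connection-restrictions $\left(S_{W}^{1},S_{W}^{2}\right)$ is the maximal coupling of the corresponding split $\left(D_{W}^{1},D_{W}^{2}\right)$, i.e. satisfies (\ref{eq: maximal for connections}). By the support observation of Remark \ref{rem: support small}, each bunch $D^{c}$ takes at most $k$ distinct states, so I record $S$ through the joint probabilities $r_{ij}=\Pr\left[S_{q}^{1}=i,S_{q}^{2}=j\right]$ on $\left\{ 1,\ldots,k\right\} ^{2}$ exactly as in (\ref{eq: def of rij}).

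The marginal family (\ref{eq: bunch 1 of 1-2}) is then automatic: because $S$ is a coupling we have $S^{c}\sim D^{c}$, and summing $r_{ij}$ over one index reproduces the distributions $p_{i}$ and $q_{j}$ of $D^{1}$ and $D^{2}$ from (\ref{eq: constraints for two bunches}). For the remaining two families the only observation I need is that within $\D$ each split is a deterministic function of its bunch, namely $D_{W}^{c}=1$ exactly when the bunch occupies a state belonging to $W$; since $S^{c}$ has the same joint law as $D^{c}$, the corresponding identity "$S_{W}^{c}=1$ iff $S_{q}^{c}\in W$" holds almost surely in the coupling as well.

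With this in hand I would specialize the maximality condition (\ref{eq: maximal for connections}) to the two smallest split sizes. For a $1$-split $W=\left\{ i\right\} $, the coincidence event $\left\{ S_{W}^{1}=S_{W}^{2}=1\right\} $ is exactly $\left\{ S_{q}^{1}=i,S_{q}^{2}=i\right\} $, whose probability is $r_{ii}$; equating it to $\min\left(p_{i},q_{i}\right)$ delivers (\ref{eq: 1 of 1-2}). For a $2$-split $W=\left\{ i,j\right\} $ with $i<j$, the event $\left\{ S_{W}^{1}=S_{W}^{2}=1\right\} $ is $\left\{ S_{q}^{1}\in\left\{ i,j\right\} ,S_{q}^{2}\in\left\{ i,j\right\} \right\} $, whose probability is $r_{ii}+r_{ij}+r_{ji}+r_{jj}$; equating it to $\min\left(p_{i}+p_{j},q_{i}+q_{j}\right)$ delivers (\ref{eq: 2 of 1-2}). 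Since all $3k+\binom{k}{2}$ equations hold for the $r_{ij}$ of $S$, the conclusion follows. Note that I never use the higher-order splits here, consistent with the fact that this is only the necessary direction.

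The argument carries no genuine obstacle beyond bookkeeping; the one point that I would isolate as a short inline justification is the preservation under the coupling of the functional relation that a split equals $1$ precisely when its bunch lies in a state of $W$, because it is exactly this fact that converts the abstract maximality constraints on the split-connections into the concrete linear equations in the $r_{ij}$. Everything else is the counting already set up in the text immediately preceding the statement.
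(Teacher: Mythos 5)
Your proposal is correct and follows essentially the same route as the paper, which treats this theorem as an immediate consequence of the setup preceding it: noncontextuality of $\D$ yields a maximally-connected coupling $S$, and restricting the maximality condition (\ref{eq: maximal for connections}) to the 1-splits and 2-splits, together with the marginal constraints, gives exactly the $3k+\binom{k}{2}$ equations (\ref{eq: bunch 1 of 1-2})--(\ref{eq: 1 of 1-2})--(\ref{eq: 2 of 1-2}) in the $r_{ij}$. The one step you isolate for justification \textemdash{} that a split equals $1$ precisely when its bunch occupies a state in $W$, a relation preserved under the coupling because $S^{c}\sim D^{c}$ \textemdash{} is the same bookkeeping the paper leaves implicit in the passage from (\ref{eq: maximal for connections}) to (\ref{eq: 1 of 1-2}) and (\ref{eq: 2 of 1-2}).
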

\begin{rem}
\label{rem: Rank}Note that $3k+\binom{k}{2}<k^{2}$ for $k>5$. (For
completeness only, Theorem \ref{thm: rank} in the supplementary file
S shows that the rank of this system of equations is $2k-1+\binom{k}{2}$.)
\end{rem}
\begin{thm}
\label{thm: higher-order splits}In a maximally-connected coupling
$S$ of $\D$ with $k>5$, the distributions of the 1-splits and 2-splits
uniquely determine the probabilities of all higher-order splits. Specifically,
for any $2<m\leq k/2$, and any $W=\left\{ i_{1},\ldots,i_{m}\right\} \subset\left\{ 1,\ldots,k\right\} $,
the probability that the corresponding $m$-split equals 1 is
\begin{equation}
\begin{array}{l}
\min\left(p_{i_{1}}+p_{i_{2}}+\ldots+p_{i_{m}},q_{i_{1}}+q_{i_{2}}+\ldots+q_{i_{m}}\right)=\sum_{j=1}^{m}\min\left(p_{i_{j}},q_{i_{j}}\right)\\
+\sum_{j=1}^{m-1}\sum_{j'=j+1}^{m}\left[\min\left(p_{i_{j}}+p_{i_{j'}},q_{i_{j}}+q_{i_{j'}}\right)-\min\left(p_{i_{j}},q_{i_{j}}\right)-\min\left(p_{i_{j'}},q_{i_{j'}}\right)\right].
\end{array}\label{eq: relation}
\end{equation}
\end{thm}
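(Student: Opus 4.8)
The plan is to evaluate the $m$-split coincidence probability through the coupling table $r_{ij}$ of (\ref{eq: def of rij}) and then regroup that evaluation into a diagonal part controlled by the $1$-splits and a symmetric off-diagonal part controlled by the $2$-splits. First I would record the bookkeeping identity. Since the bunches $D^1,D^2$ are supported on (at most) the $k$ states inherited from $R_1^1,R_1^2$, and since every dichotomization $f_W$ sends a state labelled $i$ to $1$ exactly when $i\in W$, we have $S_W^c=1$ precisely when $S_q^c\in W$, so
\[
\Pr\left[S_W^1=S_W^2=1\right]=\Pr\left[S_q^1\in W,\,S_q^2\in W\right]=\sum_{i,j\in W}r_{ij}.
\]
The one point to handle with care here is that a single enumeration $1,\ldots,k$ of the support represents both bunches and is compatible with every $f_W$ simultaneously, which is exactly the convention fixed in Section \ref{subsec: Dichotomizations-and-splits}.

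Next I would split the double sum into its diagonal and its unordered off-diagonal pairs,
\[
\sum_{i,j\in W}r_{ij}=\sum_{i\in W}r_{ii}+\sum_{\substack{\{i,j\}\subseteq W\\ i<j}}\left(r_{ij}+r_{ji}\right),
\]
and note that each summand is already fixed by lower-order split data: $r_{ii}$ is the coincidence probability of the $1$-split at $\{i\}$, while $r_{ij}+r_{ji}=(r_{ii}+r_{ij}+r_{ji}+r_{jj})-r_{ii}-r_{jj}$ is recovered from the coincidence probability of the $2$-split at $\{i,j\}$. This regrouping is what delivers the ``uniquely determine'' half of the statement, before maximality is used at all. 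Because $S$ is maximally-connected, I then substitute the explicit maximal-coupling values $r_{ii}=\min(p_i,q_i)$ from (\ref{eq: 1 of 1-2}) and $r_{ij}+r_{ji}=\min(p_i+p_j,q_i+q_j)-\min(p_i,q_i)-\min(p_j,q_j)$ from (\ref{eq: 2 of 1-2}); assembling the two sums reproduces exactly the right-hand side of (\ref{eq: relation}).

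To finish I would close the identity from the other side. Maximal-connectedness of $S$ also forces the $m$-split connection $\{D_W^1,D_W^2\}$ to be maximally coupled, so by (\ref{eq: maximal for connections}) the same quantity $\sum_{i,j\in W}r_{ij}=\Pr[S_W^1=S_W^2=1]$ equals $\min(p_{i_1}+\ldots+p_{i_m},q_{i_1}+\ldots+q_{i_m})$, the left-hand side of (\ref{eq: relation}). Equating the two evaluations of $\sum_{i,j\in W}r_{ij}$ gives the theorem; the hypothesis $k>5$ enters only to ensure that splits of order $2<m\le k/2$ exist, so the claim is non-vacuous.

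I expect the only real subtlety to be that first indexing step rather than any analytic estimate: once $\Pr[S_W^1=S_W^2=1]=\sum_{i,j\in W}r_{ij}$ is granted, the rest is a regrouping of a finite sum and two substitutions. It is worth emphasizing what the argument avoids: I never need to verify that the right-hand side of (\ref{eq: relation}) equals $\min(p_W,q_W)$ as a free-standing identity in the $p_i,q_i$ (which in fact \emph{fails} for arbitrary distributions, e.g.\ when the signed masses $p_i-q_i$ oversaturate the implied transport). The equality holds here only because a maximally-connected $S$ is \emph{assumed} to exist, which is precisely what makes both evaluations of $\sum_{i,j\in W}r_{ij}$ legitimate at once.
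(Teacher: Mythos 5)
Your proposal is correct and takes essentially the same route as the paper's own proof: both equate the maximally-coupled value $\min\left(p_{i_{1}}+\ldots+p_{i_{m}},q_{i_{1}}+\ldots+q_{i_{m}}\right)$ of the $m$-split coincidence probability with the decomposition of $\sum_{i,j\in W}r_{ij}$ into diagonal terms fixed by (\ref{eq: 1 of 1-2}) and symmetric off-diagonal pairs $r_{ij}+r_{ji}$ fixed by (\ref{eq: 2 of 1-2}). The only difference is presentational: you make explicit the bookkeeping identity $\Pr\left[S_{W}^{1}=S_{W}^{2}=1\right]=\sum_{i,j\in W}r_{ij}$, which the paper leaves implicit when it writes the double sum directly.
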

It is easy to find numerical examples of the distributions of $R_{1}^{1}$
and $R_{1}^{2}$ for which (\ref{eq: relation}) is violated (see
Example \ref{exa: Relation may be violated} in the supplementary
file S). As shown below, however, (\ref{eq: relation}) cannot be
violated if a maximally-connected coupling for the 1-2 system exists.
It follows from the fact that the statement of Theorem \ref{thm: 1-2 =000023 constraints}
can be reversed: (\ref{eq: bunch 1 of 1-2})-(\ref{eq: 1 of 1-2})-(\ref{eq: 2 of 1-2})
imply that $\D$ is noncontextual. We establish this fact by first
characterizing the distributions of $R_{1}^{1}$ and $R_{1}^{2}$
for a noncontextual 1-2 system (Theorem \ref{thm: 1-2 system} with
Corollary \ref{cor: 1-2-system}), and then showing that (\ref{eq: relation})
always holds for such distributions (Theorem \ref{thm: exists unique}).
\begin{thm}
\label{thm: 1-2 system}A maximally-connected coupling for a 1-2 system
is unique if it exists. In this coupling, the only pairs of $ij$
in (\ref{eq: def of rij}) that may have nonzero probabilities assigned
to them are the diagonal states $\left\{ 11,22,\ldots,kk\right\} $
and either the states $\left\{ i1,i2,\ldots,ik\right\} $ for a single
fixed $i$ or the states $\left\{ 1j,2j,\ldots,kj\right\} $ for a
single fixed $j$ ($i,j=1,\ldots,k$).
\end{thm}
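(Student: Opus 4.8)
The plan is to turn the defining linear system into an \emph{explicit} formula for every entry $r_{ij}$, which delivers uniqueness immediately, and then to show that nonnegativity together with the marginal constraints is so rigid that the off-diagonal mass must collapse onto a single row or a single column. Throughout write $a_i=\max(p_i-q_i,0)$ and $b_j=\max(q_j-p_j,0)$, so that exactly one of $a_i,b_i$ is nonzero, $a_i-b_i=p_i-q_i$, and set $A=\{i:p_i>q_i\}$, $B=\{j:q_j>p_j\}$ (two disjoint index sets).

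First I would read off the consequences of the diagonal and marginal equations. Equation (\ref{eq: 1 of 1-2}) fixes $r_{ii}=\min(p_i,q_i)$, and subtracting this from the row and column marginals of (\ref{eq: bunch 1 of 1-2}) gives
\[
\sum_{j\neq i}r_{ij}=p_i-\min(p_i,q_i)=a_i,\qquad \sum_{i\neq j}r_{ij}=q_j-\min(p_j,q_j)=b_j.
\]
Hence the total off-diagonal mass in row $i$ is $a_i$ and in column $j$ is $b_j$. In particular a row with $a_i=0$ is off-diagonally empty and a column with $b_j=0$ is off-diagonally empty, so $r_{ij}$ with $i\neq j$ can be positive only when $i\in A$ and $j\in B$.

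Next I would determine these surviving entries from the $2$-split equation (\ref{eq: 2 of 1-2}). For $i\in A$, $j\in B$, subtracting the two diagonal terms gives $r_{ij}+r_{ji}=\min(p_i+p_j,q_i+q_j)-\min(p_i,q_i)-\min(p_j,q_j)$, and a two-line case check (according to the sign of $(p_i+p_j)-(q_i+q_j)$) shows this equals $\min(a_i,b_j)$. But $j\in B$ has $a_j=0$, so its off-diagonal row is empty and $r_{ji}=0$; therefore $r_{ij}=\min(a_i,b_j)$. Every entry of $S$ is now fixed by $(p_i)$ and $(q_i)$ alone, which proves the coupling is unique whenever it exists.

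The remaining and, I expect, genuinely hard part is the support claim. Substituting $r_{ij}=\min(a_i,b_j)$ into the row marginals yields the consistency requirement $\sum_{j\in B}\min(a_i,b_j)=a_i$ for each $i\in A$ (and symmetrically for columns). Suppose, for contradiction, that $|A|\geq2$ and $|B|\geq2$. Put $M=\sum_{i\in A}a_i=\sum_{j\in B}b_j$ (the two totals agree because $\sum_i(p_i-q_i)=0$), and let $a_{\max}=\max_{i\in A}a_i$; since all $a_i>0$ on $A$ and $|A|\geq2$ we have $a_{\max}<M$. Applying the consistency equation at an index achieving $a_{\max}$ and splitting $B$ into $\{j:b_j\leq a_{\max}\}$ and $\{j:b_j>a_{\max}\}$, the nonnegative terms force at most one index into the second block; one case then gives $M=a_{\max}$ (contradicting $a_{\max}<M$) and the other forces $b_j=0$ on all but one index of $B$ (contradicting $|B|\geq2$). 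Hence $|A|\leq1$ or $|B|\leq1$. When $|A|\leq1$ all off-diagonal mass sits in the single row indexed by the unique $p$-excess content, giving the support $\{11,\dots,kk\}\cup\{i1,\dots,ik\}$; the case $|B|\leq1$ gives the single-column support by symmetry. The real obstacle is exactly this final counting step: converting the apparently weak ``nonnegative and correctly marginalized'' condition into the rigid one-row-or-one-column dichotomy.
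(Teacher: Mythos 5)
Your proof is correct, and it takes a genuinely different route from the paper's. The paper argues by zero-propagation: assuming some off-diagonal entry, WLOG $r_{12}>0$, is $p$-minimized, it derives structural rules (if $r_{ij}>0$ is $p$-minimized, then rows $i,j$ and column $i$ carry no other off-diagonal mass, and two $p$-minimized entries sharing a row must both vanish, etc.) and by case analysis forces all off-diagonal mass into a single row or column; uniqueness falls out of that structure. You instead extract an explicit closed form: with $a_i=\max(p_i-q_i,0)$, $b_j=\max(q_j-p_j,0)$, the diagonal equation (\ref{eq: 1 of 1-2}) plus the marginals (\ref{eq: bunch 1 of 1-2}) give the off-diagonal row and column sums $a_i$ and $b_j$, nonnegativity confines the off-diagonal support to $A\times B$, and the 2-split equations (\ref{eq: 2 of 1-2}) then pin every entry to $r_{ij}=\min(a_i,b_j)$, so uniqueness is immediate. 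Your mass-counting contradiction (if $|A|\ge2$ and $|B|\ge2$, then $a_{\max}<M$ clashes with the row-consistency equation $\sum_{j\in B}\min(a_{\max},b_j)=a_{\max}$, whose two cases give $M=a_{\max}$ or $|B|=1$) is a clean replacement for the paper's case analysis, and it buys you more: it is in effect the ``only if'' half of Corollary \ref{cor: 1-2-system}, which the paper derives separately from the theorem, while your formula $r_{ij}=\min(a_i,b_j)$ reproduces the explicit filling used in the corollary's ``if'' part. One cosmetic slip: ``exactly one of $a_i,b_i$ is nonzero'' should read ``at most one'' (both vanish when $p_i=q_i$), but nothing in your argument uses more than disjointness of $A$ and $B$ and strict positivity of $a_i$ on $A$ and $b_j$ on $B$.
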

Assuming, with no loss of generality, that the single fixed $i$ or
the single fixed $j$ in the formulation above is $2,$ the theorem
says that the nonzero probabilities assigned to the states of the
maximally-connected coupling (shown below for $k=4$) could only occupy
the cells marked with asterisks:
\begin{center}
\begin{tabular}{c|c|c|c|c|}
\multicolumn{1}{c}{} & \multicolumn{1}{c}{$1$} & \multicolumn{1}{c}{$2$} & \multicolumn{1}{c}{$3$} & \multicolumn{1}{c}{$4$}\tabularnewline
\cline{2-5} 
$1$ & $\ast$ & $\ast$ & $0$ & $0$\tabularnewline
\cline{2-5} 
$2$ & $0$ & $\ast$ & $0$ & $0$\tabularnewline
\cline{2-5} 
$3$ & $0$ & $\ast$ & $\ast$ & $0$\tabularnewline
\cline{2-5} 
$4$ & $0$ & $\ast$ & $0$ & $\ast$\tabularnewline
\cline{2-5} 
\end{tabular}$\quad$or %
\begin{tabular}{c|c|c|c|c|}
\multicolumn{1}{c}{} & \multicolumn{1}{c}{$1$} & \multicolumn{1}{c}{$2$} & \multicolumn{1}{c}{$3$} & \multicolumn{1}{c}{$4$}\tabularnewline
\cline{2-5} 
$1$ & $\ast$ & $0$ & $0$ & $0$\tabularnewline
\cline{2-5} 
$2$ & $\ast$ & $\ast$ & $\ast$ & $\ast$\tabularnewline
\cline{2-5} 
$3$ & $0$ & $0$ & $\ast$ & $0$\tabularnewline
\cline{2-5} 
$4$ & $0$ & $0$ & $0$ & $\ast$\tabularnewline
\cline{2-5} 
\end{tabular}.
\par\end{center}
\begin{cor}
\label{cor: 1-2-system}The 1-2 system for the original rvs $R_{1}^{1},R_{1}^{2}$
has a maximally-connected coupling if and only if either $p_{i}>q_{i}$
for no more than one $i$ (this single possible $i$ being the single
fixed $i$ in the formulation of the theorem), or $p_{j}<q_{j}$ for
no more than one $j$ (this single possible $j$ being the single
fixed $j$ in the formulation of the theorem), $i,j\in\left\{ 1,\ldots,k\right\} $.
\end{cor}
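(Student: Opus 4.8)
The plan is to deduce the corollary directly from Theorem \ref{thm: 1-2 system}, which already confines the support of any maximally-connected coupling of the 1-2 system to the diagonal $\{11,\ldots,kk\}$ together with a single full row $\{i1,\ldots,ik\}$ or a single full column $\{1j,\ldots,kj\}$. Both directions then amount to reading off what the linear equations (\ref{eq: bunch 1 of 1-2})--(\ref{eq: 1 of 1-2})--(\ref{eq: 2 of 1-2}) force upon the marginals $p$ and $q$, so no genuinely new machinery is needed beyond a careful bookkeeping of which cells vanish.

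For the ``only if'' direction I would start from a maximally-connected coupling and, invoking Theorem \ref{thm: 1-2 system}, assume its support is the diagonal plus one row, say row $i$; the column case is identical after transposing the roles of $p$ and $q$. For every index $\ell\neq i$ the only surviving entry in row $\ell$ is the diagonal one, so the row-sum constraint in (\ref{eq: bunch 1 of 1-2}) gives $r_{\ell\ell}=p_{\ell}$, while the 1-split constraint (\ref{eq: 1 of 1-2}) gives $r_{\ell\ell}=\min(p_{\ell},q_{\ell})$. Together these force $p_{\ell}\le q_{\ell}$, so $p_{\ell}>q_{\ell}$ can hold only at the single index $\ell=i$, which is exactly the asserted condition.

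For the converse I would exhibit the coupling explicitly. Assume $p_{\ell}>q_{\ell}$ for at most one index. If there is no such index, normalization forces $p=q$ and the diagonal coupling $r_{\ell\ell}=p_{\ell}$ already satisfies everything. Otherwise let $i$ be the unique index with $p_i>q_i$; set $r_{\ell\ell}=\min(p_{\ell},q_{\ell})$ on the diagonal, $r_{ij}=q_j-p_j\ge 0$ for $j\neq i$, and all remaining entries to zero. The diagonal constraint (\ref{eq: 1 of 1-2}) holds by construction, the column sums and the off-diagonal row sums are immediate, and the row sum at $i$ checks out because $\sum_{j\neq i}(q_j-p_j)=p_i-q_i$.

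The hard part will be verifying the 2-split equation (\ref{eq: 2 of 1-2}) for this construction. For a pair $\{i,j\}$ containing the special index, the left-hand side collapses to $q_i+q_j$, so one must show $q_i+q_j=\min(p_i+p_j,q_i+q_j)$, i.e.\ $q_i+q_j\le p_i+p_j$; pairs avoiding $i$ are trivial since both off-diagonal cells vanish and $p_i+p_j\le q_i+q_j$ there. The observation that resolves the nontrivial case is that $p_i-q_i=\sum_{j'\neq i}(q_{j'}-p_{j'})$ is a sum of nonnegative terms and hence dominates the single term $q_j-p_j$; this is precisely $p_i-q_i\ge q_j-p_j$, which rearranges to the required inequality. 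Uniqueness is inherited verbatim from Theorem \ref{thm: 1-2 system}, so both directions are complete.
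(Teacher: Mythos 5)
Your proof is correct and takes essentially the same route as the paper's: the ``only if'' direction is read off from Theorem \ref{thm: 1-2 system} (which the paper simply declares obvious, while you spell out the marginal bookkeeping), and the ``if'' direction exhibits the same explicit coupling, yours supported on the diagonal plus one row where the paper's sits on the diagonal plus one column --- the mirror image under exchanging $p$ and $q$. The key inequality you isolate, $p_i-q_i=\sum_{j'\neq i}\left(q_{j'}-p_{j'}\right)\geq q_j-p_j$, is exactly what the paper's terse remark that the special column's entries ``are to be $p$-minimized'' amounts to.
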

The relationship between $\left(p_{1},\ldots,p_{k}\right)$ and $\left(q_{1},\ldots,q_{k}\right)$
described in this corollary is some form of stochastic dominance for
categorical rvs, but it does not seem to have been previously identified.
We propose to say that $R_{1}^{1}$\emph{ nominally dominates }$R_{1}^{2}$
if $p_{i}<q_{i}$ for no more than one value of $i=1,\ldots,k$ (i.e.,
$p_{i}\geq q_{i}$ for at least $k-1$ of them). Two categorical rvs
nominally dominate each other if and only if either they are identically
distributed or $k=2$. Using this notion, and combining Corollary
\ref{cor: 1-2-system} with Theorems \ref{thm: 1-2 =000023 constraints}
and \ref{thm: 1-2 system}, we get the main result of this section.
\begin{thm}
\label{thm: exists unique}The system $\D$ is noncontextual if and
only if its 1-2 subsystem is noncontextual, i.e., if and only if one
of the $R_{1}^{1}$\emph{ and }$R_{1}^{2}$ nominally dominates the
other.
\end{thm}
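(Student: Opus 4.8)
The plan is to reduce everything to the single $k\times k$ matrix $r=(r_{ij})$ of (\ref{eq: def of rij}) and then to show that the explicit $1$-$2$ coupling already satisfies every higher-split constraint. Because the support of each bunch of $\D$ has at most $k$ states, an overall coupling $S$ of $\D$ is completely determined by a nonnegative $r$ with the marginals (\ref{eq: bunch 1 of 1-2}); and, writing $P_W=\sum_{i\in W}p_i$ and $Q_W=\sum_{i\in W}q_i$, the requirement that $(S_W^1,S_W^2)$ be the maximal coupling (\ref{eq: maximal for connections}) of the connection $\D_W$ is equivalent to the single scalar equation $\sum_{i,j\in W}r_{ij}=\min(P_W,Q_W)$, which I abbreviate $(\star_W)$. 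Thus $\D$ is noncontextual iff some nonnegative $r$ with marginals (\ref{eq: bunch 1 of 1-2}) satisfies $(\star_W)$ for \emph{every} dichotomization $W$, while the $1$-$2$ subsystem is noncontextual iff such an $r$ exists satisfying $(\star_W)$ only for $|W|\le 2$, i.e. (\ref{eq: 1 of 1-2})--(\ref{eq: 2 of 1-2}). In this language the asserted equivalence ``$1$-$2$ noncontextual $\Leftrightarrow$ nominal dominance'' is exactly Corollary \ref{cor: 1-2-system} combined with the definition of nominal dominance, and ``$\D$ noncontextual $\Rightarrow$ $1$-$2$ noncontextual'' is Theorem \ref{thm: 1-2 =000023 constraints} (simply drop the constraints with $|W|>2$). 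So the only genuinely new content is the converse: nominal dominance $\Rightarrow$ $\D$ noncontextual.

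For the converse I would fix, with no loss of generality by the symmetry $p\leftrightarrow q$, the case where $R_1^1$ nominally dominates $R_1^2$, so that $p_i\ge q_i$ for all $i$ except possibly one index $j_0$, at which $p_{j_0}\le q_{j_0}$ (the case $p=q$ being trivial via the diagonal coupling). By Corollary \ref{cor: 1-2-system} the unique $1$-$2$ coupling $r$ then exists and satisfies (\ref{eq: 1 of 1-2})--(\ref{eq: 2 of 1-2}). Splitting $\sum_{i,j\in W}r_{ij}$ into its diagonal part $\sum_{i\in W}r_{ii}$ and the pairwise sums $\sum_{\{i,j\}\subset W}(r_{ij}+r_{ji})$ and substituting (\ref{eq: 1 of 1-2})--(\ref{eq: 2 of 1-2}) gives the purely algebraic identity that $\sum_{i,j\in W}r_{ij}$ equals the right-hand side of (\ref{eq: relation}). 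Hence this $r$ satisfies $(\star_W)$ for all $W$ \emph{if and only if} the distributional identity (\ref{eq: relation}) holds, so it only remains to verify (\ref{eq: relation}) under nominal dominance.

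The engine of that verification is the elementary inequality $p_i-q_i\le q_{j_0}-p_{j_0}$ for every single $i\ne j_0$, which holds because $\sum_{i\ne j_0}(p_i-q_i)=q_{j_0}-p_{j_0}$ is a sum of nonnegative terms, so no single summand can exceed the total. It has two consequences: first, $\min(p_i+p_{j_0},q_i+q_{j_0})=p_i+p_{j_0}$ for each $i\ne j_0$; second, summing over any $W$ containing $j_0$ gives $P_W\le Q_W$. I then check (\ref{eq: relation}) in two cases. If $j_0\notin W$, every minimum on both sides is the corresponding $q$-sum and both sides equal $Q_W$. If $j_0\in W$, substituting $\min(p_i,q_i)=q_i$ for $i\ne j_0$, $\min(p_{j_0},q_{j_0})=p_{j_0}$, and the simplified pairwise minima makes the $q_i$ contributions cancel, so the right-hand side of (\ref{eq: relation}) collapses to $P_W$, which by $P_W\le Q_W$ equals $\min(P_W,Q_W)$, the left-hand side. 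This establishes (\ref{eq: relation}); therefore $r$ witnesses $(\star_W)$ for every $W$, $\D$ is noncontextual, and the converse (hence the theorem) follows.

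The step I expect to be the main obstacle is the $j_0\in W$ case of (\ref{eq: relation}): it is precisely where the \emph{pointwise} inequality $p_i-q_i\le q_{j_0}-p_{j_0}$, rather than its on-average version, is indispensable, since it is what collapses every pair-minimum involving $j_0$ and makes the simplified sum land exactly on $P_W=\min(P_W,Q_W)$. By contrast, the reduction in the first two paragraphs is essentially bookkeeping once one observes that an overall coupling of $\D$ is carried entirely by the single matrix $r$ and that maximal-connectedness of $\D_W$ amounts to the one scalar equation $(\star_W)$.
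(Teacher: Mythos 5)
Your proposal is correct and follows essentially the same route as the paper's own proof: reduce via Theorem \ref{thm: 1-2 =000023 constraints} and Corollary \ref{cor: 1-2-system} to the 1-2 system, then verify the identity (\ref{eq: relation}) under nominal dominance by the same two-case analysis ($j_0\notin W$ versus $j_0\in W$). Your explicit statement of the pointwise inequality $p_i-q_i\le q_{j_0}-p_{j_0}$ and of the reduction of maximal-connectedness of each $\D_W$ to the scalar equation $(\star_W)$ merely makes explicit what the paper uses implicitly (the latter being the content of its Theorem \ref{thm: higher-order splits}).
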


\section{Concluding remarks}

Contextuality analysis of an empirical situation involves the following
sequence of steps:

\[
\xymatrix@C=1cm{\mbox{\ensuremath{\begin{array}{c}
 \textnormal{empirical}\\
 \textnormal{measurements} 
\end{array}}\ar[r]} & \textnormal{\ensuremath{\begin{array}{c}
 \textnormal{initial}\\
 \textnormal{system of rvs} 
\end{array}}}\ar[r] & \textnormal{\ensuremath{\begin{array}{c}
 \textnormal{expanded}\\
 \textnormal{system of rvs} 
\end{array}}}\ar[r] & \textnormal{\ensuremath{\begin{array}{c}
 \textnormal{canonical/split}\\
 \textnormal{representation} 
\end{array}}}}
\]
In the initial system, measurements are represented by rvs each of
which generally has multiple values. Expansion means adding to the
system new conteNts with corresponding connections (conteNt-sharing
rvs) computed as functions of the existing connections. In a canonical
representation of the system all rvs are binary, and the connections
are coupled multimaximally, meaning essentially that one deals with
their elements pairwise. The issue of contextuality is reduced to
that of compatibility of the unique couplings for pairs of conteNt-sharing
rvs with the known distributions of the conteXt-sharing bunches of
rvs. Coupling the connections multimaximally ensures that a noncontextual
system has all its subsystems noncontextual too.

The canonical system of rvs is uniquely determined by the expanded
system, but the latter is inherently non-unique, it depends on what
aspects of the empirical situation one wishes to include in the system.
Thus, it is one's choice rather than a general rule whether one considers
a multi-valued measurement as representable by \emph{all} or only
\emph{some} of its possible coarsenings. If one chooses all coarsenings,
the split/canonical representation involves all dichotomizations,
and then Theorem \ref{thm: exists unique} says that the canonical
system is noncontextual only if, for any pair of rvs $R_{q}^{c},R_{q}^{c'}$
in the expanded system, one of them, say $R_{q}^{c}$, ``nominally
dominates'' the other. This domination means that $\Pr\left[R_{q}^{c}=x\right]<\Pr\left[R_{q}^{c'}=x\right]$
holds for no more than one value $x$ of these rvs: a stringent necessary
condition for noncontextuality, likely to be violated in many empirical
systems. 

This is of special interest for contextuality studies outside quantum
physics. Historically, the search for non-quantum contextual systems
was motivated by the possibility of applying quantum-theoretic formalisms
in such fields as biology \cite{Asanoetal2015}, psychology \cite{WangBusemeyer2013,BusemeyerBruza2012},
economics \cite{HavenKhrennikov2012}, and political science \cite{Khrennikova}.
In CbD, the notion of contextuality is not tied to quantum formalisms
in any special way. The possibility of non-quantum contextual systems
here is motivated by treating contextuality as an abstract probabilistic
issue: there are no a priori reasons why a system of rvs describing,
say, human behavior could not be contextual if it is qualitatively
(i.e., up to specific probability values) the same as a contextual
one describing particle spins. Nevertheless, all known to us systems
with dichotomous responses investigated for potential contextuality
(with the exception of one, very recent experiment) have been found
to be noncontextual \cite{DzhafarovKujalaCervantesZhangJones(2016),DzhafarovZhangKujala(2015)Isthere,CervantesDzhafarov2017}.
The use of canonical representations with dichotomizations of multiple-choice
responses offers new possibilities. 

In some cases, however, the use of all possible dichotomizations is
not justifiable. Notably, if the values of an rv are linearly ordered,
$x_{1}<x_{2}<\ldots,x_{N}$, it may be natural to only allow dichotomizations
$f$ with $f^{-1}\left(1\right)$ containing several successive values,
$\left\{ x_{l},x_{l+1},\ldots,x_{L}\right\} $, for some $l,L\in\left\{ 1,\ldots,N\right\} $.
An even stronger restriction would be to only allow ``cuts,'' with
$f^{-1}\left(1\right)=\left\{ x_{l},x_{l+1},\ldots,x_{N}\right\} $
or $\left\{ x_{1},x_{2},\ldots,x_{l-1}\right\} $. 
\begin{center}
\includegraphics[bb=0bp 100bp 792bp 550bp,scale=0.4]{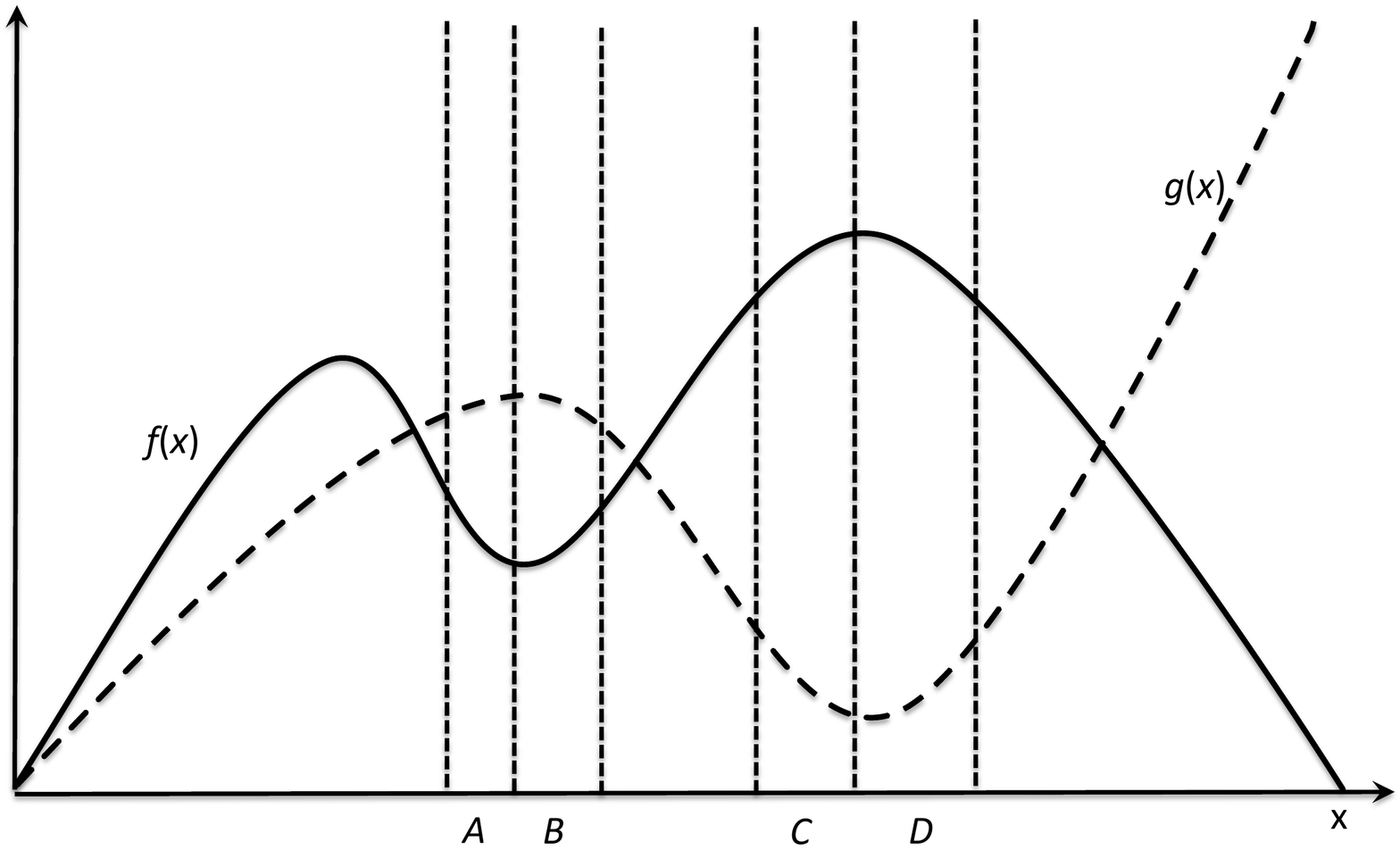}
\par\end{center}

\noindent Stronger restrictions on possible dichotomizations translate
into stronger restrictions on the pairs $R_{q}^{c},R_{q'}^{c}$ whose
canonical representation is contextual. This fact is especially important
if one considers expanding CbD beyond categorical rvs. Thus, it is
easy to see that if one considers all possible dichotomizations of
two conteNt-sharing rvs with continuous densities on the set of real
numbers, then the system will be contextual whenever the two distributions
are not identical. Let the densities of these rvs be $f\left(x\right)$
and $g\left(x\right)$ shown in the graphic above. If the set of all
splits of these rvs forms a noncontextual system, then any discretization
of these rvs should satisfy Corollary \ref{cor: 1-2-system} to Theorem
\ref{thm: 1-2 system}. That is, for any $k>2$ and any partition
$H_{1},\ldots,H_{k}$ of the set of reals into intervals, we should
have either
\begin{equation}
\begin{array}{c}
\int_{H_{i}}f\left(x\right)dx<\int_{H_{i}}g\left(x\right)dx\textnormal{ for no more than one of \ensuremath{i=1,\ldots,k},}\\
\textnormal{or}\\
\int_{H_{i}}f\left(x\right)dx>\int_{H_{i}}g\left(x\right)dx\textnormal{ for no more than one of \ensuremath{i=1,\ldots,k}}.
\end{array}\label{eq: continuous}
\end{equation}
This is, however, impossible unless $f\left(x\right)=g\left(x\right)$.
If they are different, then $f$ exceeds $g$ on some interval, and
$g$ exceeds $f$ on some other interval. If we take any two subintervals
within each of these intervals (in the graphic they are denoted by
$A,B$ and $C,D$), any partition $H_{1},\ldots,H_{k}$ that includes
$A,B,C,D$ will violate (\ref{eq: continuous}). The development of
the theory of canonical representations with variously restricted
sets of splits is a task for future work.

\paragraph*{Data accessibility. }

See Remark \ref{rem: The-proofs-of}. 

\paragraph*{Competing interests. }

We have no financial or non-financial competing interests.

\paragraph*{Authors\textquoteright{} contributions. }

All authors significantly contributed to the development of the theory
and drafting of the paper.

\paragraph*{Acknowledgments.}

We have greatly benefited from discussions with Matt Jones, Samson
Abramsky, Rui Soares Barbosa, and Pawel Kurzynski. 

\paragraph*{Funding statement. }

This research has been supported by AFOSR grant FA9550-14-1-0318.

\newpage{}

\appendix
\setcounter{page}{1}

\renewcommand{\thesection}{\Alph{section}}

\renewcommand{\thesection}{S}

\numberwithin{equation}{section}

\section{Supplementary Text to ``Contextuality in Canonical Systems of Random
Variables'' by Ehtibar N. Dzhafarov, Víctor H. Cervantes, and Janne
V. Kujala (Phil. Trans. Roy. Soc. A xxx, 10.1098/rsta.2016.0389)}
\begin{thm}[Section \ref{sec: A-systematic-study}, Remark \ref{rem: Rank}]
\label{thm: rank}The rank of the system of linear equations (\ref{eq: bunch 1 of 1-2})-(\ref{eq: 1 of 1-2})-(\ref{eq: 2 of 1-2})
is \textup{$2k-1+\binom{k}{2}$.}
\end{thm}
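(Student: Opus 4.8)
The plan is to read the $3k+\binom{k}{2}$ equations (\ref{eq: bunch 1 of 1-2})--(\ref{eq: 1 of 1-2})--(\ref{eq: 2 of 1-2}) as the rows of a coefficient matrix acting on the $k^{2}$ unknowns $r_{ij}$ of (\ref{eq: def of rij}), and to compute the rank as (number of rows) minus the dimension of the space of linear relations among the rows (the left null space). So I would reduce the theorem to showing that this relation space has dimension exactly $k+1$, whence the rank is $3k+\binom{k}{2}-(k+1)=2k-1+\binom{k}{2}$. Throughout I would keep the four families separate: the $k$ row-sum equations $\rho_i$ and $k$ column-sum equations $\sigma_j$ from (\ref{eq: bunch 1 of 1-2}), the $k$ diagonal equations $\delta_i$ from (\ref{eq: 1 of 1-2}), and the $\binom{k}{2}$ pair equations $\pi_{ij}$ from (\ref{eq: 2 of 1-2}).

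First I would write a generic relation
\[
\sum_{i=1}^{k}\alpha_{i}\rho_{i}+\sum_{j=1}^{k}\beta_{j}\sigma_{j}+\sum_{i=1}^{k}\mu_{i}\delta_{i}+\sum_{1\le i<j\le k}\nu_{ij}\pi_{ij}=0,
\]
interpreted as the vanishing of the corresponding combination of left-hand sides, and read off the coefficient of each unknown $r_{st}$. For an off-diagonal entry $r_{st}$ with $s\neq t$, only $\rho_{s}$, $\sigma_{t}$, and the single pair $\pi_{\{s,t\}}$ contribute, giving $\alpha_{s}+\beta_{t}+\nu_{\{s,t\}}=0$. The key observation is that $\nu$ is indexed by \emph{unordered} pairs, so $\nu_{\{s,t\}}=\nu_{\{t,s\}}$ forces $\alpha_{s}-\beta_{s}=\alpha_{t}-\beta_{t}$ for all $s\neq t$; calling this common value $\lambda$, we get $\alpha_{s}=\beta_{s}+\lambda$ and $\nu_{\{s,t\}}=-(\beta_{s}+\beta_{t}+\lambda)$, all pinned down by the $\beta$'s and $\lambda$.

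Next I would treat the diagonal entries: the coefficient of $r_{ss}$ collects $\rho_{s}$, $\sigma_{s}$, $\delta_{s}$, and every pair $\pi_{\{s,j\}}$ with $j\neq s$, yielding $\alpha_{s}+\beta_{s}+\mu_{s}+\sum_{j\neq s}\nu_{\{s,j\}}=0$, which solves \emph{uniquely} for $\mu_{s}$ in terms of data already fixed and imposes no new constraint. Hence a relation is completely determined by the free parameters $\beta_{1},\dots,\beta_{k}$ and $\lambda$, and conversely every such choice yields a genuine relation; since $\beta_{s}$ and $\lambda=\alpha_{s}-\beta_{s}$ can be recovered from the relation, the parametrization is injective and the relation space has dimension exactly $k+1$, giving the claimed rank. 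As a sanity check I would exhibit an explicit spanning set: the total-mass relation $\sum_{i}\rho_{i}-\sum_{j}\sigma_{j}=0$ (a $\lambda\neq 0$ direction), together with, for each fixed $i$, the relation summing the pair equations $\pi_{\{i,j\}}$ over $j\neq i$ and cancelling the surplus against $\rho_{i}$, $\sigma_{i}$, and the diagonal equations.

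The step I expect to carry the weight is the off-diagonal symmetry argument, namely recognizing that the unordered-pair indexing of the $\nu$'s is precisely what collapses the two families of marginal coefficients $\alpha,\beta$ down to a single additive constant $\lambda$; everything after that is routine, since the diagonal equations merely solve for the $\mu_{i}$. The only care needed is the bookkeeping confirming that $\mu_{s}$ is genuinely unconstrained and that distinct $(\beta,\lambda)$ give distinct relations, and a remark that the small-$k$ cases (where $\binom{k}{2}$ is tiny) are covered directly by the same parameter count rather than by the independence check on the explicit spanning set.
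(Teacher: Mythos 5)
Your proof is correct, but it takes a genuinely different route from the paper's. You compute the left null space of the coefficient matrix directly: a generic relation $\sum_i\alpha_i\rho_i+\sum_j\beta_j\sigma_j+\sum_i\mu_i\delta_i+\sum_{i<j}\nu_{ij}\pi_{ij}=0$ is forced, via the off-diagonal unknowns and the unordered indexing of the $\nu$'s, into the form $\alpha_s=\beta_s+\lambda$, $\nu_{\{s,t\}}=-(\beta_s+\beta_t+\lambda)$, with the diagonal unknowns merely solving for the $\mu_s$; this parametrizes the relation space bijectively by $(\beta_1,\ldots,\beta_k,\lambda)\in\mathbb{R}^{k+1}$, giving rank $3k+\binom{k}{2}-(k+1)=2k-1+\binom{k}{2}$. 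The paper instead manipulates the coefficient matrix $\mathbf{M}$ itself: it adjoins a redundant all-ones row $\mathbf{r}_{all}$, verifies by explicit identities that the rows $\mathbf{r}_{k\cdot}$, $\mathbf{r}_{\cdot k}$, $\mathbf{r}_{kk}$, and $\mathbf{r}_{ik}$ ($i<k$) are linear combinations of the remaining rows (a net removal of $k+1$ rows), and then proves the surviving $2k-1+\binom{k}{2}$ rows are independent by a sequential pivoting argument: repeatedly locate a column whose only nonzero entry lies in a single row, conclude that row's coefficient in any vanishing combination is zero, and delete it. The two approaches buy different things. Yours characterizes the entire space of dependencies in one uniform computation, with no need to guess which rows are redundant, no verification of ad hoc identities, and no case analysis over $k$; it also shows that the dependencies you exhibit (the total-mass relation and, for each $i$, the relation tying $\sum_{j\neq i}\pi_{\{i,j\}}$ to $\rho_i$, $\sigma_i$, and the $\delta$'s) are, up to span, the only ones. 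The paper's argument is more constructive: it leaves behind a concrete maximal independent subset of the equations, which is what one wants if the goal is to actually discard redundant constraints in the linear-programming formulation. Both correctly yield nullity $k+1$ and the stated rank.
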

\begin{proof}[Proof of Theorem \ref{thm: rank}]
This system of linear equations can be written as
\[
\mathbf{M}\times\mathbf{X}=\mathbf{P},
\]
where
\[
\mathbf{P}^{T}=\left(\begin{array}{l}
\overset{k}{\overbrace{p_{1},\ldots,p_{k}}},\overset{k}{\overbrace{q_{1},\ldots,q_{k}}},\overset{k}{\overbrace{\min\left(p_{1},q_{1}\right),\ldots,\min\left(p_{k},q_{k}\right)}},\\
\\
\\
\overset{\binom{k}{2}}{\overbrace{\min\left(p_{1}+p_{2},q_{1}+q_{2}\right),\ldots,\min\left(p_{k-1}+p_{k},q_{k-1}+q_{k}\right)}}
\end{array}\right),
\]
\[
\mathbf{X}^{T}=\left\{ x_{ij}:i,j\in\left\{ 1,\ldots,k\right\} \right\} ,
\]
and $\mathbf{M}$ is a Boolean matrix. The $\left(k+k+k+\binom{k}{2}\right)$
rows of matrix \textbf{$\mathbf{M}$} correspond to the elements of
$\mathbf{P}$ and can be labeled as 
\[
\left(\overset{k}{\overbrace{\mathbf{r}_{1\cdot},\ldots,\mathbf{r}{}_{k\cdot}}},\overset{k}{\overbrace{\mathbf{r}_{\cdot1},\ldots,\mathbf{r}_{\cdot k}}},\overset{k}{\overbrace{\mathbf{r}_{11},\ldots,\mathbf{r}_{kk}}},\overset{\binom{k}{2}}{\overbrace{\mathbf{r}_{12},\ldots,\mathbf{r}_{k-1,k}}}\right),
\]
whereas the $k^{2}$ columns of $\mathbf{M}$ correspond to the elements
of $\mathbf{X}$ and can be labeled as 
\[
\left\{ \mathbf{c}_{ij}:i,j\in\left\{ 1,\ldots,k\right\} \right\} .
\]
Thus, if $k=4$, the matrix $\mathbf{M}$ is 
\[
\begin{array}{ccccccccccccccccc}
\begin{array}{cc}
 & \mathbf{c}\\
\mathbf{r}
\end{array} & 11 & 12 & 13 & 14 & 21 & 22 & 23 & 24 & 31 & 32 & 33 & 34 & 41 & 42 & 43 & 44\\
1\cdot & 1 & 1 & 1 & 1\\
2\cdot &  &  &  &  & 1 & 1 & 1 & 1\\
3\cdot &  &  &  &  &  &  &  &  & 1 & 1 & 1 & 1\\
4\cdot &  &  &  &  &  &  &  &  &  &  &  &  & 1 & 1 & 1 & 1\\
\cdot1 & 1 &  &  &  & 1 &  &  &  & 1 &  &  &  & 1\\
\cdot2 &  & 1 &  &  &  & 1 &  &  &  & 1 &  &  &  & 1\\
\cdot3 &  &  & 1 &  &  &  & 1 &  &  &  & 1 &  &  &  & 1\\
\cdot4 &  &  &  & 1 &  &  &  & 1 &  &  &  & 1 &  &  &  & 1\\
11 & 1\\
22 &  &  &  &  &  & 1\\
33 &  &  &  &  &  &  &  &  &  &  & 1\\
44 &  &  &  &  &  &  &  &  &  &  &  &  &  &  &  & 1\\
12 & 1 & 1 &  &  & 1 & 1\\
13 & 1 &  & 1 &  &  &  &  &  & 1 &  & 1\\
14 & 1 &  &  & 1 &  &  &  &  &  &  &  &  & 1 &  &  & 1\\
23 &  &  &  &  &  & 1 & 1 &  &  & 1 & 1\\
24 &  &  &  &  &  & 1 &  & 1 &  &  &  &  &  & 1 &  & 1\\
34 &  &  &  &  &  &  &  &  &  &  & 1 & 1 &  &  & 1 & 1
\end{array}
\]
We will continue to illustrate the steps of the proof using this matrix.
We begin by adding to $\mathbf{M}$ the row $\mathbf{r}_{all}$ with
all cells equal to 1, and denote the new matrix $\mathbf{M'}$. 
\[
\begin{array}{ccccccccccccccccc}
\begin{array}{cc}
 & \mathbf{c}\\
\mathbf{r}
\end{array} & 11 & 12 & 13 & 14 & 21 & 22 & 23 & 24 & 31 & 32 & 33 & 34 & 41 & 42 & 43 & 44\\
1\cdot & 1 & 1 & 1 & 1\\
2\cdot &  &  &  &  & 1 & 1 & 1 & 1\\
3\cdot &  &  &  &  &  &  &  &  & 1 & 1 & 1 & 1\\
4\cdot &  &  &  &  &  &  &  &  &  &  &  &  & 1 & 1 & 1 & 1\\
\cdot1 & 1 &  &  &  & 1 &  &  &  & 1 &  &  &  & 1\\
\cdot2 &  & 1 &  &  &  & 1 &  &  &  & 1 &  &  &  & 1\\
\cdot3 &  &  & 1 &  &  &  & 1 &  &  &  & 1 &  &  &  & 1\\
\cdot4 &  &  &  & 1 &  &  &  & 1 &  &  &  & 1 &  &  &  & 1\\
11 & 1\\
22 &  &  &  &  &  & 1\\
33 &  &  &  &  &  &  &  &  &  &  & 1\\
44 &  &  &  &  &  &  &  &  &  &  &  &  &  &  &  & 1\\
12 & 1 & 1 &  &  & 1 & 1\\
13 & 1 &  & 1 &  &  &  &  &  & 1 &  & 1\\
14 & 1 &  &  & 1 &  &  &  &  &  &  &  &  & 1 &  &  & 1\\
23 &  &  &  &  &  & 1 & 1 &  &  & 1 & 1\\
24 &  &  &  &  &  & 1 &  & 1 &  &  &  &  &  & 1 &  & 1\\
34 &  &  &  &  &  &  &  &  &  &  & 1 & 1 &  &  & 1 & 1\\
all & 1 & 1 & 1 & 1 & 1 & 1 & 1 & 1 & 1 & 1 & 1 & 1 & 1 & 1 & 1 & 1
\end{array}
\]
This does not change the rank of the matrix since $\mathbf{r}_{all}$
is the sum of all $\mathbf{r}_{\cdot i}$. Then we observe that the
rows $\mathbf{r}_{k\cdot}$, $\mathbf{r}_{\cdot k}$, and all $\mathbf{r}_{ik}$
with $i<k$ can be deleted as they are linear combinations of the
remaining rows of $\mathbf{M'}$. Indeed, it can be checked directly
that
\[
\mathbf{r}_{k\cdot}=\mathbf{r}_{all}-\sum_{i=1}^{k-1}\mathbf{r}_{i\cdot},
\]
\[
\mathbf{r}_{\cdot k}=\mathbf{r}_{all}-\sum_{i=1}^{k-1}\mathbf{r}_{\cdot i},
\]
\[
\left(\mathbf{r}_{ik}-\mathbf{r}_{ii}-\mathbf{r}_{kk}\right)=\left(\mathbf{r}_{i\cdot}-\mathbf{r}_{ii}\right)+\left(\mathbf{r}_{\cdot i}-\mathbf{r}_{ii}\right)-\sum_{l<i}\left(\mathbf{r}_{li}-\mathbf{r}_{ll}-\mathbf{r}_{ii}\right)-\sum_{l>i}^{l<k}\left(\mathbf{r}_{il}-\mathbf{r}_{ii}-\mathbf{r}_{ll}\right),
\]
for all $i<k$. Moreover, one can also delete $\mathbf{r}_{kk}$,
because
\[
\sum_{i<j<k}\left(\mathbf{r}_{ij}-\mathbf{r}_{ii}-\mathbf{r}_{jj}\right)+\sum_{i<k}\left(\mathbf{r}_{ik}-\mathbf{r}_{ii}-\mathbf{r}_{kk}\right)+\sum_{i<k}\mathbf{r}_{ii}+\mathbf{r}_{kk}=\mathbf{r}_{all}.
\]
Let the resulting matrix be $\mathbf{M''}$:
\[
\begin{array}{ccccccccccccccccc}
\begin{array}{cc}
 & \mathbf{c}\\
\mathbf{r}
\end{array} & 11 & 12 & 13 & 14 & 21 & 22 & 23 & 24 & 31 & 32 & 33 & 34 & 41 & 42 & 43 & 44\\
1\cdot & 1 & 1 & 1 & 1\\
2\cdot &  &  &  &  & 1 & 1 & 1 & 1\\
3\cdot &  &  &  &  &  &  &  &  & 1 & 1 & 1 & 1\\
\cdot1 & 1 &  &  &  & 1 &  &  &  & 1 &  &  &  & 1\\
\cdot2 &  & 1 &  &  &  & 1 &  &  &  & 1 &  &  &  & 1\\
\cdot3 &  &  & 1 &  &  &  & 1 &  &  &  & 1 &  &  &  & 1\\
11 & 1\\
22 &  &  &  &  &  & 1\\
33 &  &  &  &  &  &  &  &  &  &  & 1\\
12 & 1 & 1 &  &  & 1 & 1\\
13 & 1 &  & 1 &  &  &  &  &  & 1 &  & 1\\
23 &  &  &  &  &  & 1 & 1 &  &  & 1 & 1\\
all & 1 & 1 & 1 & 1 & 1 & 1 & 1 & 1 & 1 & 1 & 1 & 1 & 1 & 1 & 1 & 1
\end{array}
\]

This matrix contains 
\[
\overset{initial}{3k+\binom{k}{2}}-\underset{\mathbf{r}_{k\cdot},\mathbf{r}_{\cdot k},\mathbf{r}_{kk}}{\underbrace{3}}-\overset{all\:\mathbf{r}_{ik},i<k}{\overbrace{\left(k-1\right)}}+\underset{\mathbf{r}_{all}}{\underbrace{1}}=2k-1+\binom{k}{2}
\]
rows. We prove that this matrix is of full row rank. Consider equation
\[
\sum_{all\:\mathbf{r}\textnormal{ in }\mathbf{M}''}\alpha_{\mathbf{r}}\mathbf{r}=0.
\]
We use the following principle: if a row $\mathbf{r}$ intersects
a columns whose only nonzero entry is in the row $\mathbf{r}$, then
$\alpha_{\mathbf{r}}=0$, and we can delete the row $\mathbf{r}$
from the matrix, decreasing the row rank of the matrix by 1. The following
statements can be directly verified.

$\mathbf{r}_{all}$ can be deleted because column $\mathbf{c}_{kk}$
has its only 1 in $\mathbf{r}_{all}$. 

\[
\begin{array}{ccccccccccccccccc}
\begin{array}{cc}
 & \mathbf{c}\\
\mathbf{r}
\end{array} & 11 & 12 & 13 & 14 & 21 & 22 & 23 & 24 & 31 & 32 & 33 & 34 & 41 & 42 & 43 & 44\\
1\cdot & 1 & 1 & 1 & 1\\
2\cdot &  &  &  &  & 1 & 1 & 1 & 1\\
3\cdot &  &  &  &  &  &  &  &  & 1 & 1 & 1 & 1\\
\cdot1 & 1 &  &  &  & 1 &  &  &  & 1 &  &  &  & 1\\
\cdot2 &  & 1 &  &  &  & 1 &  &  &  & 1 &  &  &  & 1\\
\cdot3 &  &  & 1 &  &  &  & 1 &  &  &  & 1 &  &  &  & 1\\
11 & 1\\
22 &  &  &  &  &  & 1\\
33 &  &  &  &  &  &  &  &  &  &  & 1\\
12 & 1 & 1 &  &  & 1 & 1\\
13 & 1 &  & 1 &  &  &  &  &  & 1 &  & 1\\
23 &  &  &  &  &  & 1 & 1 &  &  & 1 & 1
\end{array}
\]

Then each of $\mathbf{r}_{\cdot i}$can be deleted because the column
$\mathbf{c}_{ki}$ has its only 1 in $\mathbf{r}_{\cdot i}$ ($i=1,\ldots,k-1$).
\[
\begin{array}{ccccccccccccccccc}
\begin{array}{cc}
 & \mathbf{c}\\
\mathbf{r}
\end{array} & 11 & 12 & 13 & 14 & 21 & 22 & 23 & 24 & 31 & 32 & 33 & 34 & 41 & 42 & 43 & 44\\
1\cdot & 1 & 1 & 1 & 1\\
2\cdot &  &  &  &  & 1 & 1 & 1 & 1\\
3\cdot &  &  &  &  &  &  &  &  & 1 & 1 & 1 & 1\\
11 & 1\\
22 &  &  &  &  &  & 1\\
33 &  &  &  &  &  &  &  &  &  &  & 1\\
12 & 1 & 1 &  &  & 1 & 1\\
13 & 1 &  & 1 &  &  &  &  &  & 1 &  & 1\\
23 &  &  &  &  &  & 1 & 1 &  &  & 1 & 1
\end{array}
\]

Then each of $\mathbf{r}_{i\cdot}$ can be deleted because the column
$\mathbf{c}_{ik}$ has its only 1 in $\mathbf{r}_{i\cdot}$ ($i=1,\ldots,k-1$).

\[
\begin{array}{ccccccccccccccccc}
\begin{array}{cc}
 & \mathbf{c}\\
\mathbf{r}
\end{array} & 11 & 12 & 13 & 14 & 21 & 22 & 23 & 24 & 31 & 32 & 33 & 34 & 41 & 42 & 43 & 44\\
11 & 1\\
22 &  &  &  &  &  & 1\\
33 &  &  &  &  &  &  &  &  &  &  & 1\\
12 & 1 & 1 &  &  & 1 & 1\\
13 & 1 &  & 1 &  &  &  &  &  & 1 &  & 1\\
23 &  &  &  &  &  & 1 & 1 &  &  & 1 & 1
\end{array}
\]

Then each of $\mathbf{r}_{ij}$ can be deleted because the column
$\mathbf{c}_{ji}$ has its only 1 in $\mathbf{r}_{ij}$ ($i,j\in\left\{ 1,\ldots,k-1\right\} ,i<j$).

\[
\begin{array}{ccccccccccccccccc}
\begin{array}{cc}
 & \mathbf{c}\\
\mathbf{r}
\end{array} & 11 & 12 & 13 & 14 & 21 & 22 & 23 & 24 & 31 & 32 & 33 & 34 & 41 & 42 & 43 & 44\\
11 & 1\\
22 &  &  &  &  &  & 1\\
33 &  &  &  &  &  &  &  &  &  &  & 1
\end{array}
\]

This leaves only $\mathbf{r}_{11},\ldots,\mathbf{r}_{\left(k-1\right)\left(k-1\right)}$
that are obviously linearly independent.
\end{proof}
\rule[0.5ex]{0.9\columnwidth}{1pt}
\begin{thm*}[Section \ref{sec: A-systematic-study}, Theorem \ref{thm: higher-order splits}]
In a maximally-connected coupling $S$ of $\D$ with $k>5$, the
distributions of the 1-splits and 2-splits uniquely determine the
probabilities of all higher-order splits. Specifically, for any $2<m\leq k/2$,
and any $W=\left\{ i_{1},\ldots,i_{m}\right\} \subset\left\{ 1,\ldots,k\right\} $,
the probability that the corresponding $m$-split equals 1 is
\begin{equation}
\begin{array}{l}
\min\left(p_{i_{1}}+p_{i_{2}}+\ldots+p_{i_{m}},q_{i_{1}}+q_{i_{2}}+\ldots+q_{i_{m}}\right)=\sum_{j=1}^{m}\min\left(p_{i_{j}},q_{i_{j}}\right)\\
+\sum_{j=1}^{m-1}\sum_{j'=j+1}^{m}\left[\min\left(p_{i_{j}}+p_{i_{j'}},q_{i_{j}}+q_{i_{j'}}\right)-\min\left(p_{i_{j}},q_{i_{j}}\right)-\min\left(p_{i_{j'}},q_{i_{j'}}\right)\right].
\end{array}\label{eq: relation-1}
\end{equation}
\end{thm*}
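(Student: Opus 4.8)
The plan is to work directly with the joint array $r_{ij}=\Pr\left[S_{q}^{1}=i,S_{q}^{2}=j\right]$ of the overall coupling, and to recognize the right-hand side of (\ref{eq: relation-1}) as an inclusion--exclusion rewriting of the block sum $\sum_{i,j\in W}r_{ij}$. First I would record the elementary fact that a split $D_{W}$ equals $1$ exactly when the coupled value (enumerated in $\left\{ 1,\ldots,k\right\}$ as in the main text) lies in $W$, so that for \emph{every} $W\subseteq\left\{ 1,\ldots,k\right\}$ one has $\Pr\left[S_{W}^{1}=S_{W}^{2}=1\right]=\sum_{i,j\in W}r_{ij}$. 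Because $S$ is maximally-connected, each connection $D_{W}$ is maximally coupled, so by (\ref{eq: maximal for connections}) this probability also equals $\min\left(\sum_{i\in W}p_{i},\sum_{i\in W}q_{i}\right)$. Applying this to $W=\left\{ i_{1},\ldots,i_{m}\right\}$ identifies the left-hand side of (\ref{eq: relation-1}) with $\sum_{i,j\in W}r_{ij}$, while applying it to singletons and to two-element sets yields exactly the already-recorded identities (\ref{eq: 1 of 1-2}), i.e.\ $r_{ii}=\min\left(p_{i},q_{i}\right)$, and (\ref{eq: 2 of 1-2}), i.e.\ $r_{ii}+r_{ij}+r_{ji}+r_{jj}=\min\left(p_{i}+p_{j},q_{i}+q_{j}\right)$.

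The core step is a substitution. Replacing each $\min\left(p_{i_{j}},q_{i_{j}}\right)$ on the right of (\ref{eq: relation-1}) by $r_{i_{j}i_{j}}$, and each $\min\left(p_{i_{j}}+p_{i_{j'}},q_{i_{j}}+q_{i_{j'}}\right)$ by $r_{i_{j}i_{j}}+r_{i_{j}i_{j'}}+r_{i_{j'}i_{j}}+r_{i_{j'}i_{j'}}$, every bracketed pair-term collapses: the two diagonal contributions cancel and only the off-diagonal pair $r_{i_{j}i_{j'}}+r_{i_{j'}i_{j}}$ survives. Summing the singleton terms over $j$ and the collapsed pair-terms over $j<j'$ then gives
\[
\sum_{j=1}^{m}r_{i_{j}i_{j}}+\sum_{j=1}^{m-1}\sum_{j'=j+1}^{m}\left(r_{i_{j}i_{j'}}+r_{i_{j'}i_{j}}\right)=\sum_{i,j\in W}r_{ij},
\]
since each diagonal index is counted once and each ordered off-diagonal pair exactly once. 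Combining with the first paragraph, the right-hand side of (\ref{eq: relation-1}) equals $\sum_{i,j\in W}r_{ij}$, which equals its left-hand side; the asserted uniqueness is then immediate, as the value is pinned down by the one- and two-split data alone. The hypotheses $k>5$ and $2<m\le k/2$ never enter the algebra; they merely guarantee that genuine higher-order splits exist and that $W$ is the admissible representative of its dichotomization.

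The bookkeeping is routine once the reformulation is seen, so the real content lies elsewhere, and that is where I would concentrate the exposition. Identity (\ref{eq: relation-1}) is \emph{not} a universal truth about minima of sums \textemdash{} the promised numerical counterexamples show it can fail \textemdash{} and it holds here only because a maximally-connected coupling $S$ is assumed to exist; it is exactly this assumption that supplies one nonnegative array $\left(r_{ij}\right)$ whose ``inside-$W$'' marginals simultaneously realize all the one-, two-, and $m$-split maximal couplings. The step deserving the most care is therefore the verification that (\ref{eq: 1 of 1-2}) and (\ref{eq: 2 of 1-2}) genuinely hold for this $S$, that is, that being maximally-connected forces, for each one- or two-element $W$, the block sum $\sum_{i,j\in W}r_{ij}$ to equal the corresponding $\min$. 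Once those two families of equations are secured, the higher-order formula follows from the mere internal consistency of $\left(r_{ij}\right)$, requiring no appeal to the finer support structure of Theorem \ref{thm: 1-2 system}.
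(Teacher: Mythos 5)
Your proposal is correct and is essentially the paper's own argument: both identify the block sum $\sum_{i,j\in W}r_{ij}$ with $\min\left(\sum_{i\in W}p_{i},\sum_{i\in W}q_{i}\right)$ via the maximal-coupling condition (\ref{eq: maximal for connections}), and then decompose that block sum into diagonal terms fixed by (\ref{eq: 1 of 1-2}) and symmetric off-diagonal pairs fixed by (\ref{eq: 2 of 1-2}). The only difference is the direction of the algebra (you collapse the right-hand side into the block sum, the paper expands the block sum into the right-hand side), which is immaterial.
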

\begin{proof}[Proof of Theorem \ref{thm: higher-order splits}]
 From (\ref{eq: 1 of 1-2}) and (\ref{eq: 2 of 1-2}),
\[
\begin{array}{cccc}
r_{12}+r_{21} & = & \min\left(p_{1}+p_{2},q_{1}+q_{2}\right)-\min\left(p_{1},q_{1}\right)-\min\left(p_{2},q_{2}\right)\\
\vdots & \vdots & \vdots\\
r_{ij}+r_{ji} & = & \min\left(p_{i}+p_{j},q_{i}+q_{j}\right)-\min\left(p_{i},q_{i}\right)-\min\left(p_{j},q_{j}\right) & (i<j).\\
\vdots & \vdots & \vdots\\
r_{\left(k-1\right)k}+r_{k\left(k-1\right)} & = & \min\left(p_{k-1}+p_{k},q_{k-1}+q_{k}\right)-\min\left(p_{k-1},q_{k-1}\right)-\min\left(p_{k},q_{k}\right)
\end{array}
\]
Consider an $m$-split with $2<m\leq k/2$, and assume without loss
of generality that $W=\left(1,\ldots,m\right)$. We have 
\begin{equation}
\sum_{i=1}^{m}\sum_{j=1}^{m}r_{ij}=\min\left(p_{1}+\ldots+p_{m},q_{1}+\ldots+q_{m}\right).
\end{equation}
The left-hand-side sum can be presented as
\[
\begin{array}{l}
\sum_{i=1}^{m}r_{ii}+\sum_{i=1}^{m-1}\sum_{j=i+1}^{m}\left(r_{ij}+r_{ji}\right)\\
\\
=\sum_{i=1}^{m}\min\left(p_{i},q_{i}\right)+\sum_{i=1}^{m-1}\sum_{j=i+1}^{m}\left[\min\left(p_{i}+p_{j},q_{i}+q_{j}\right)-\min\left(p_{i},q_{i}\right)-\min\left(p_{j},q_{j}\right)\right],
\end{array}
\]
whence we get (\ref{eq: relation}).
\end{proof}
\rule[0.5ex]{0.9\columnwidth}{1pt}
\begin{example}[showing that the relation (\ref{eq: relation}) may be violated, see
Section \ref{sec: A-systematic-study}.]
\label{exa: Relation may be violated} If

\begin{center}

\begin{tabular}{c|c|c|c|c|c|c|}
\cline{2-7} 
$R_{1}^{1}$= & $1$ & $2$ & $3$ & $4$ & $0$ & $0$\tabularnewline
\cline{2-7} 
prob. mass $p=$ & $.6$ & $.1$ & $.1$ & $.2$ & $0$ & $0$\tabularnewline
\cline{2-7} 
\end{tabular},$\quad$%
\begin{tabular}{c|c|c|c|c|c|c|}
\cline{2-7} 
$R_{1}^{2}$= & $1$ & $2$ & $3$ & $4$ & $0$ & $0$\tabularnewline
\cline{2-7} 
prob. mass $q=$ & $.2$ & $.3$ & $.4$ & $.1$ & $0$ & $0$\tabularnewline
\cline{2-7} 
\end{tabular},

\end{center}

\noindent then
\[
\begin{array}{l}
\overset{.8}{\overbrace{\min\left(p_{1}+p_{2}+p_{3},q_{1}+q_{2}+q_{3}\right)}}\\
\\
\not=\left.\begin{array}{rll}
\min\left(p_{1},q_{1}\right) &  & .2\\
+\min\left(p_{2},q_{2}\right) &  & .1\\
+\min\left(p_{3},q_{3}\right) &  & .1\\
+\min\left(p_{1}+p_{2},q_{1}+q_{2}\right)-\min\left(p_{1},q_{1}\right)-\min\left(p_{2},q_{2}\right) &  & .5-.2-.1\\
+\min\left(p_{1}+p_{3},q_{1}+q_{3}\right)-\min\left(p_{1},q_{1}\right)-\min\left(p_{3},q_{3}\right) &  & .6-.2-.1\\
+\min\left(p_{2}+p_{3},q_{2}+q_{3}\right)-\min\left(p_{2},q_{2}\right)-\min\left(p_{3},q_{3}\right) &  & .2-.1-.1
\end{array}\right\} =.5\\
\\
\end{array}\hfill\square
\]
\end{example}
\rule[0.5ex]{0.9\columnwidth}{1pt}
\begin{thm*}[Section \ref{sec: A-systematic-study}, Theorem \ref{thm: 1-2 system}]
A maximally-connected coupling for a 1-2 system is unique if it exists.
In this coupling, the only pairs of $ij$ in (\ref{eq: def of rij})
that may have nonzero probabilities assigned to them are the diagonal
states $\left\{ 11,22,\ldots,kk\right\} $ and either the states $\left\{ i1,i2,\ldots,ik\right\} $
for a single fixed $i$ or the states $\left\{ 1j,2j,\ldots,kj\right\} $
for a single fixed $j$ ($i,j=1,\ldots,k$).
\end{thm*}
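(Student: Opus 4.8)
The plan is to pin down every entry $r_{ij}$ of a putative maximally-connected coupling directly from equations (\ref{eq: bunch 1 of 1-2})--(\ref{eq: 1 of 1-2})--(\ref{eq: 2 of 1-2}), extracting both uniqueness and the support pattern at once. Throughout I would write $d_i = p_i - q_i$, so $\sum_i d_i = 0$. The diagonal is handed to us immediately by (\ref{eq: 1 of 1-2}): $r_{ii} = \min(p_i,q_i)$. Subtracting this from the marginals (\ref{eq: bunch 1 of 1-2}) gives the off-diagonal row and column totals $\sum_{j\neq i} r_{ij} = p_i - \min(p_i,q_i) = \max(d_i,0)$ and $\sum_{i\neq j} r_{ij} = q_j - \min(p_j,q_j) = \max(-d_j,0)$. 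Setting $P = \{i : d_i > 0\}$ and $N = \{j : d_j < 0\}$, nonnegativity of the masses then forces every off-diagonal $r_{ij}$ to vanish unless $i \in P$ and $j \in N$.

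First I would use this vanishing pattern to solve the pairwise equations. Fix $i \in P$ and $j \in N$. Since $j \notin P$, the whole off-diagonal of row $j$ is zero, in particular $r_{ji} = 0$, so (\ref{eq: 2 of 1-2}) collapses to
\[
r_{ij} = \min(p_i + p_j,\, q_i + q_j) - \min(p_i,q_i) - \min(p_j,q_j) = \min(d_i,\, -d_j),
\]
the last equality being a short case check on the sign of $d_i + d_j$. Hence every entry of $(r_{ij})$ in (\ref{eq: def of rij}) is determined by $p$ and $q$ alone, which already yields the \emph{uniqueness} half of the statement: any two maximally-connected couplings must coincide.

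The support claim is the part that requires genuine work, and it is where I expect the main obstacle. What remains is to show that the forced positive block $\{r_{ij} : i\in P,\ j\in N\}$, each entry of which equals $\min(d_i,-d_j) > 0$, cannot truly be two-dimensional, i.e. that existence of the coupling forces $|P| \le 1$ or $|N| \le 1$ -- exactly the ``single fixed $i$ or single fixed $j$'' dichotomy. I would argue by contradiction: assuming $|P| \ge 2$ and $|N| \ge 2$, pick $i_0 \in P$ minimizing $d_{i_0} =: \delta > 0$. Its off-diagonal row total must equal $\delta$, yet each summand is $\min(\delta, -d_j) > 0$. If some $j \in N$ had $-d_j \ge \delta$, that single summand already equals $\delta$ while the remaining (at least one) summands are strictly positive, overshooting $\delta$; so $-d_j < \delta$ for all $j \in N$, whence the row total equals $\sum_{j\in N}(-d_j) = \sum_{i\in P} d_i \ge 2\delta$, again exceeding $\delta$. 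Either branch contradicts the row total being $\delta$.

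Consequently the nonzero off-diagonal entries lie in a single row (when $|P| \le 1$) or a single column (when $|N| \le 1$), which together with the diagonal $\{11,\ldots,kk\}$ is precisely the asserted pattern, with the degenerate case $p=q$ falling under either alternative trivially. The only delicate points are the sign case-analysis collapsing the right-hand side above to $\min(d_i,-d_j)$, and keeping the strict-inequality bookkeeping honest in the final counting argument; everything else is bookkeeping on the marginals.
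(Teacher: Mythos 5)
Your proof is correct, but it follows a genuinely different route from the paper's. The paper never solves for the entries of the coupling: it assumes some off-diagonal $r_{12}>0$, classifies pairs as ``$p$-minimized'' or ``$q$-minimized,'' and propagates zeros through the matrix via four structural rules (R1--R4) until all positive off-diagonal mass is cornered into a single row or a single column. You instead compute the coupling in closed form: the diagonal is forced by (\ref{eq: 1 of 1-2}); the marginal equations plus nonnegativity kill every off-diagonal entry outside $P\times N$ (where $d_i=p_i-q_i$, $P=\{i:d_i>0\}$, $N=\{j:d_j<0\}$); and the pairwise equations (\ref{eq: 2 of 1-2}), applied with $r_{ji}=0$, then force $r_{ij}=\min(d_i,-d_j)>0$ on all of $P\times N$. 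Uniqueness is then immediate, and the support claim reduces to the arithmetic fact that existence forces $|P|\le 1$ or $|N|\le 1$, which your extremal argument establishes soundly: taking $\delta=\min_{i\in P}d_i$ and splitting on whether some $-d_j\ge\delta$, in either branch the off-diagonal sum of row $i_0$ strictly exceeds its forced value $\delta$, since all $|N|\ge 2$ summands $\min(\delta,-d_j)$ are strictly positive and $\sum_{j\in N}(-d_j)=\sum_{i\in P}d_i\ge 2\delta$. Your route buys explicit formulas for the coupling entries and delivers the necessity (``only if'') direction of Corollary \ref{cor: 1-2-system} as an immediate byproduct (the paper declares that part obvious, and your computation is what makes it so); the paper's route avoids global bookkeeping with the deficiencies $d_i$, localizing everything around one assumed positive entry, and makes the final asterisk-pattern picture displayed after the theorem visually transparent, at the cost of a longer case-by-case propagation.
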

\begin{proof}[Proof of Theorem \ref{thm: 1-2 system}]
(The matrices illustrating the proof are shown for $k>6$ but the
theorem is valid for all $k>1$.) If the only nonzero entries in the
matrix are in the main diagonal, the theorem is trivially true. Assume
therefore that $r_{ij}>0$ for some $i\not=j$. Without loss of generality,
we can assume that $r_{12}>0$ and $p_{1}+p_{2}\leq q_{1}+q_{2}$.
Indeed, if some $r_{ij}>0$, we can always rename the values so that
$i=1$ and $j=2$; and if $p_{1}+p_{2}>q_{1}+q_{2}$, then we can
simply rename all $p$s into $q$s and vice versa. In the following
we will use the expression ``$r_{ij}$ is $p$-minimized'' if $p_{i}+p_{j}\leq q_{i}+q_{j}$,
and ``$r_{ij}$ is $q$-minimized'' if $p_{i}+p_{j}\geq q_{i}+q_{j}$
(in both cases, $i\not=j$).

We have (the empty cells are those whose value is to be determined
later)

\begin{center}

\begin{tabular}{c|c|c|c|c|c|c|c||c}
\multicolumn{1}{c}{} & \multicolumn{1}{c}{$1$} & \multicolumn{1}{c}{$2$} & \multicolumn{1}{c}{$3$} & \multicolumn{1}{c}{$4$} & \multicolumn{1}{c}{$5$} & \multicolumn{1}{c}{$6$} & \multicolumn{1}{c}{$\ldots$} & \tabularnewline
\cline{2-9} 
$1$ & $r_{11}$ & $r_{12}>0$ &  &  &  &  &  & $p_{1}$\tabularnewline
\cline{2-9} 
$2$ & $r_{21}$ & $r_{22}$ &  &  &  &  &  & $p_{2}$\tabularnewline
\cline{2-9} 
$3$ &  &  & $r_{33}$ &  &  &  &  & \tabularnewline
\cline{2-9} 
$4$ &  &  &  & $r_{44}$ &  &  &  & \tabularnewline
\cline{2-9} 
$5$ &  &  &  &  & $r_{55}$ &  &  & \tabularnewline
\cline{2-9} 
$6$ &  &  &  &  &  & $r_{66}$ &  & \tabularnewline
\cline{2-9} 
$\vdots$ &  &  &  &  &  &  & $\vdots\vdots\vdots$ & $\vdots$\tabularnewline
\cline{2-9} 
 & $q_{1}$ & $q_{2}$ &  &  &  &  & $\ldots$ & \tabularnewline
\end{tabular}..

\end{center}

\noindent From (\ref{eq: 1 of 1-2})-(\ref{eq: 2 of 1-2}), $r_{11}+r_{12}+r_{21}+r_{22}=\min\left\{ p_{1}+p_{2}q{}_{1}+q_{2}\right\} $,
and since $r_{12}$ is $p$-minimized, $r_{11}+r_{12}+r_{21}+r_{22}=p_{1}+p_{2}$.
This means

\begin{center}

\begin{tabular}{c|c|c|c|c|c|c|c||c}
\multicolumn{1}{c}{} & \multicolumn{1}{c}{$1$} & \multicolumn{1}{c}{$2$} & \multicolumn{1}{c}{$3$} & \multicolumn{1}{c}{$4$} & \multicolumn{1}{c}{$5$} & \multicolumn{1}{c}{$6$} & \multicolumn{1}{c}{$\ldots$} & \tabularnewline
\cline{2-9} 
$1$ & $r_{11}$ & $r_{12}>0$ & $0$ & $0$ & $0$ & $0$ & \textbf{$\mathbf{0}$} & $p_{1}=r_{11}+r_{12}$\tabularnewline
\cline{2-9} 
$2$ & $r_{21}$ & $r_{22}$ & $0$ & $0$ & $0$ & $0$ & \textbf{$\mathbf{0}$} & $p_{2}=r_{21}+r_{22}$\tabularnewline
\cline{2-9} 
$3$ &  &  & $r_{33}$ &  &  &  &  & \tabularnewline
\cline{2-9} 
$4$ &  &  &  & $r_{44}$ &  &  &  & \tabularnewline
\cline{2-9} 
$5$ &  &  &  &  & $r_{55}$ &  &  & \tabularnewline
\cline{2-9} 
$6$ &  &  &  &  &  & $r_{66}$ &  & \tabularnewline
\cline{2-9} 
$\vdots$ &  &  &  &  &  &  & $\vdots\vdots\vdots$ & $\vdots$\tabularnewline
\cline{2-9} 
 & $q_{1}\geq r_{11}+r_{21}$ & $q_{2}\geq r_{12}+r_{22}$ &  &  &  &  & $\ldots$ & \tabularnewline
\end{tabular}.

\end{center}

\noindent We also should have

\begin{center}

\begin{tabular}{c|c|c|c|c|c|c|c||c}
\multicolumn{1}{c}{} & \multicolumn{1}{c}{$1$} & \multicolumn{1}{c}{$2$} & \multicolumn{1}{c}{$3$} & \multicolumn{1}{c}{$4$} & \multicolumn{1}{c}{$5$} & \multicolumn{1}{c}{$6$} & \multicolumn{1}{c}{$\ldots$} & \tabularnewline
\cline{2-9} 
$1$ & $r_{11}$ & $r_{12}>0$ & $0$ & $0$ & $0$ & $0$ & \textbf{$\mathbf{0}$} & $p_{1}=r_{11}+r_{12}$\tabularnewline
\cline{2-9} 
$2$ & $0$ & $r_{22}$ & $0$ & $0$ & $0$ & $0$ & \textbf{$\mathbf{0}$} & $p_{2}=r_{22}$\tabularnewline
\cline{2-9} 
$3$ & $0$ &  & $r_{33}$ &  &  &  &  & \tabularnewline
\cline{2-9} 
$4$ & $0$ &  &  & $r_{44}$ &  &  &  & \tabularnewline
\cline{2-9} 
$5$ & $0$ &  &  &  & $r_{55}$ &  &  & \tabularnewline
\cline{2-9} 
$6$ & $0$ &  &  &  &  & $r_{66}$ &  & \tabularnewline
\cline{2-9} 
$\vdots$ & \textbf{$\mathbf{0}$} &  &  &  &  &  & $\vdots\vdots\vdots$ & $\vdots$\tabularnewline
\cline{2-9} 
 & $q_{1}=r_{11}$ & $q_{2}\geq r_{12}+r_{22}$ &  &  &  &  & $\ldots$ & \tabularnewline
\cline{9-9} 
\end{tabular}

\end{center}

\noindent because $r_{11}=\min\left\{ p_{1},q_{1}\right\} $ and
$r_{11}<p_{1}$.

Generalizing, we have established the following rules:

(R1) If $r_{ij}>0$ and it is $p$-minimized, then all non-diagonal
elements in the rows $i$ and $j$ are zero except for $r_{ij}$,
and all non-diagonal elements in the column $i$ are zero.

(R2) (By symmetry, on exchanging $p$s and $q$s) If $r_{ij}>0$ and
it is $q$-minimized, then all non-diagonal elements in the columns
$i$ and $j$ are zero except for $r_{ij}$, and all non-diagonal
elements in the row $j$ are zero.

Returning to our special arrangement of the rows and columns, let
us prove now that all $r_{1j}$ with $j>2$ are $q$-minimized.  Assume
the contrary, and with no loss of generality, let $r_{15}=0$ be $p$-minimized.
This would mean that 
\[
r_{15}+r_{51}=p_{1}+p_{5}-r_{11}-r_{55}=r_{12}+p_{5}-r_{55}=0,
\]
which could only be true if $r_{12}=0$, which it is not. 

\begin{center}

\begin{tabular}{c|c|c|c|c|c|c|c||c}
\multicolumn{1}{c}{} & \multicolumn{1}{c}{$1$} & \multicolumn{1}{c}{$2$} & \multicolumn{1}{c}{$3$} & \multicolumn{1}{c}{$4$} & \multicolumn{1}{c}{$5$} & \multicolumn{1}{c}{$6$} & \multicolumn{1}{c}{$\ldots$} & \tabularnewline
\cline{2-9} 
$1$ & $r_{11}$ & $r_{12}>0$ & $\underset{q-min}{0}$ & $\underset{q-min}{0}$ & $\underset{q-min}{0}$ & $\underset{q-min}{0}$ & $\underset{q-min}{\mathbf{0}}$ & $p_{1}=r_{11}+r_{12}$\tabularnewline
\cline{2-9} 
$2$ & $0$ & $r_{22}$ & $0$ & $0$ & $0$ & $0$ & \textbf{$\mathbf{0}$} & $p_{2}=r_{22}$\tabularnewline
\cline{2-9} 
$3$ & $0$ &  & $r_{33}$ &  &  &  &  & \tabularnewline
\cline{2-9} 
$4$ & $0$ &  &  & $r_{44}$ &  &  &  & \tabularnewline
\cline{2-9} 
$5$ & $0$ &  &  &  & $r_{55}$ &  &  & $p_{5}$\tabularnewline
\cline{2-9} 
$6$ & $0$ &  &  &  &  & $r_{66}$ &  & \tabularnewline
\cline{2-9} 
$\vdots$ & \textbf{$\mathbf{0}$} &  &  &  &  &  & $\vdots\vdots\vdots$ & $\vdots$\tabularnewline
\cline{2-9} 
 & $q_{1}=r_{11}$ & $q_{2}\geq r_{12}+r_{22}$ &  &  &  &  & $\ldots$ & \tabularnewline
\cline{9-9} 
\end{tabular}

\end{center}

Generalizing, we have established two additional rules:

\noindent (R3) If $r_{ij}$ and $r_{ij'}$ are both $p$-minimized
(for pairwise distinct $i,j,j'$), then they are both zero (because
if one of them is not, say $r_{ij}>0$, then $r_{ij'}=0$ and it must
be $q$-minimized).

(R4) (By symmetry, on exchanging $p$s and $q$s) If $r_{ij}$ and
$r_{i'j}$ are both $q$-minimized (for pairwise distinct $i,i',j$),
then they are both zero.

Returning to our special arrangement of the rows and columns, it follows
that nowhere in the matrix can we have $r_{ij}>0$ ($i>2$) which
is $q$-minimized. Indeed, if $j>2$, then this would have contradicted
R4 (because the zeros in the first row are all $q$-minimized), and
if $j=2$, it would have contradicted R2 (because $r_{12}>0$).

Let us prove now that if $j>2$ and $i>2$ and $i\not=j$, then there
is no $r_{ij}>0$ that is $p$-minimized. Assume the contrary: $r_{ij}>0$
and $q$-minimized, and consider $r_{2i},r_{i2}$. With no loss of
generality, let $\left(i,j\right)$=$\left(4,6\right)$. In accordance
with R1, we fill in the 4th and the 6th rows with zeros, and we fill
in the 4th column with zeros too:

\begin{center}

\begin{tabular}{c|c|c|c|c|c|c|c||c}
\multicolumn{1}{c}{} & \multicolumn{1}{c}{$1$} & \multicolumn{1}{c}{$2$} & \multicolumn{1}{c}{$3$} & \multicolumn{1}{c}{$4$} & \multicolumn{1}{c}{$5$} & \multicolumn{1}{c}{$6$} & \multicolumn{1}{c}{$\ldots$} & \tabularnewline
\cline{2-9} 
$1$ & $r_{11}$ & $r_{12}>0$ & $0$ & $0$ & $0$ & $0$ & \textbf{$\mathbf{0}$} & $p_{1}=r_{11}+r_{12}$\tabularnewline
\cline{2-9} 
$2$ & $0$ & $r_{22}$ & $0$ & $0$ & $0$ & $0$ & \textbf{$\mathbf{0}$} & $p_{2}=r_{22}$\tabularnewline
\cline{2-9} 
$3$ & $0$ &  & $r_{33}$ & $0$ &  &  &  & \tabularnewline
\cline{2-9} 
$4$ & $0$ & $0$ & $0$ & $r_{44}$ & $0$ & $r_{46}>0$ & \textbf{$\mathbf{0}$} & $p_{4}=r_{44}+r_{46}$\tabularnewline
\cline{2-9} 
$5$ & $0$ &  &  & $0$ & $r_{55}$ &  &  & \tabularnewline
\cline{2-9} 
$6$ & $0$ & $0$ & $0$ & $r_{64}=0$ & $0$ & $r_{66}$ & \textbf{$\mathbf{0}$} & $p_{6}=r_{66}$\tabularnewline
\cline{2-9} 
$\vdots$ & \textbf{$\mathbf{0}$} &  &  & \textbf{$\mathbf{0}$} &  &  & $\vdots\vdots\vdots$ & $\vdots$\tabularnewline
\cline{2-9} 
 & $q_{1}=r_{11}$ & $q_{2}\geq r_{12}+r_{22}$ &  & $q_{4}=r_{44}$ &  & $q_{6}\geq r_{46}+r_{66}$ & $\ldots$ & \tabularnewline
\cline{9-9} 
\end{tabular}

\end{center}

\noindent Then $r_{24},r_{42}$ are both zero, whence $\min\left(p_{2}+p_{4},q_{2}+q_{4}\right)$
must equal $r_{22}+r_{44}$ to be a maximal coupling. But
\[
\min\left(p_{2}+p_{4},q_{2}+q_{4}\right)=\min\left(r_{22}+r_{44}+r_{46},r_{12}+r_{22}+r_{44}+x\right)>r_{22}+r_{44},
\]
since both $r_{12}$ and $r_{46}$ are positive, a contradiction. 

We come to the conclusion that the only positive non-diagonal elements
in the matrix can be in the column $2$ (and they are all $p$-minimized). 

\begin{center}

\begin{tabular}{c|c|c|c|c|c|c|c||c}
\multicolumn{1}{c}{} & \multicolumn{1}{c}{$1$} & \multicolumn{1}{c}{$2$} & \multicolumn{1}{c}{$3$} & \multicolumn{1}{c}{$4$} & \multicolumn{1}{c}{$5$} & \multicolumn{1}{c}{$6$} & \multicolumn{1}{c}{$\ldots$} & \tabularnewline
\cline{2-9} 
$1$ & $r_{11}$ & $r_{12}>0$ & $0$ & $0$ & $0$ & $0$ & \textbf{$\mathbf{0}$} & $p_{1}=r_{11}+r_{12}$\tabularnewline
\cline{2-9} 
$2$ & $0$ & $r_{22}$ & $0$ & $0$ & $0$ & $0$ & \textbf{$\mathbf{0}$} & $p_{2}=r_{22}$\tabularnewline
\cline{2-9} 
$3$ & $0$ & $r_{32}\geq0$ & $r_{33}$ & $0$ & $0$ & $0$ & \textbf{$\mathbf{0}$} & $p_{3}=r_{32}+r_{33}$\tabularnewline
\cline{2-9} 
$4$ & $0$ & $r_{42}\geq0$ & $0$ & $r_{44}$ & $0$ & $0$ & \textbf{$\mathbf{0}$} & $p_{4}=r_{42}+r_{44}$\tabularnewline
\cline{2-9} 
$5$ & $0$ & $r_{52}\geq0$ & $0$ & $0$ & $r_{55}$ & $0$ & \textbf{$\mathbf{0}$} & $p_{5}=r_{52}+r_{55}$\tabularnewline
\cline{2-9} 
$6$ & $0$ & $r_{62}\geq0$ & $0$ & $0$ & $0$ & $r_{66}$ & \textbf{$\mathbf{0}$} & $p_{6}=r_{62}+r_{66}$\tabularnewline
\cline{2-9} 
$\vdots$ & \textbf{$\mathbf{0}$} & $\vdots$ & \textbf{$\mathbf{0}$} & \textbf{$\mathbf{0}$} & \textbf{$\mathbf{0}$} & \textbf{$\mathbf{0}$} & $\vdots\vdots\vdots$ & $\vdots$\tabularnewline
\cline{2-9} 
 & $q_{1}=r_{11}$ & $q_{2}\geq r_{12}+r_{22}$ & $q_{3}=r_{33}$ & $q_{4}=r_{44}$ & $q_{5}=r_{55}$ & $q_{6}=r_{66}$ & $\ldots$ & \tabularnewline
\cline{9-9} 
\end{tabular}

\end{center}

Generalizing, let $r_{ij}>0$ and $i\not=j$. Then, if $r_{ij}$ is
$p$-minimized, all non-diagonal elements of the matrix outside column
$j$ are zero (and the non-diagonal elements in the $j$th column
are $p$-minimized); if $r_{ij}$ is $q$-minimized, then all non-diagonal
elements of the matrix outside row $i$ are zero (and the non-diagonal
elements in the $i$th row are $q$-minimized).

It is easy to check that such a construction is always internally
consistent.
\end{proof}
\rule[0.5ex]{0.9\columnwidth}{1pt}
\begin{cor*}[Section \ref{sec: A-systematic-study}, Corollary \ref{cor: 1-2-system}]
The 1-2 system for the original rvs $R_{1}^{1},R_{1}^{2}$ has a
maximally-connected coupling if and only if either $p_{i}>q_{i}$
for no more than one $i$ (this single possible $i$ being the single
fixed $i$ in the formulation of the theorem), or $p_{j}<q_{j}$ for
no more than one $j$ (this single possible $j$ being the single
fixed $j$ in the formulation of the theorem), $i,j\in\left\{ 1,\ldots,k\right\} $.
\end{cor*}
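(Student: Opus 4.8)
The plan is to derive the corollary directly from Theorem \ref{thm: 1-2 system}, which already guarantees that any maximally-connected coupling of the 1-2 system places all of its off-diagonal mass either in a single column $j$ or in a single row $i$. The two options are interchanged by swapping the roles of $p$ and $q$ (equivalently, transposing the array $r_{ij}$ of (\ref{eq: def of rij})), and the two halves of the disjunction in the statement are likewise swapped by $p \leftrightarrow q$. I would therefore treat only the ``single column'' alternative in detail and obtain the ``single row'' case by this symmetry.

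For the ``only if'' direction, suppose a maximally-connected coupling exists and, by Theorem \ref{thm: 1-2 system}, assume its nonzero off-diagonal entries all lie in one column $j^{*}$. For a row $i\neq j^{*}$, which then contains only $r_{ii}$ and $r_{ij^{*}}$, the row-sum constraint (\ref{eq: bunch 1 of 1-2}) together with (\ref{eq: 1 of 1-2}) gives $r_{ij^{*}}=p_{i}-\min(p_{i},q_{i})$; and row $j^{*}$, having only its diagonal entry, forces $p_{j^{*}}=\min(p_{j^{*}},q_{j^{*}})$, i.e. $p_{j^{*}}\leq q_{j^{*}}$. Summing column $j^{*}$ via (\ref{eq: bunch 1 of 1-2}) then yields $q_{j^{*}}-p_{j^{*}}=\sum_{i\neq j^{*}}(p_{i}-\min(p_{i},q_{i}))$, while $\sum_{i}(p_{i}-q_{i})=0$ gives $\sum_{i\neq j^{*}}(p_{i}-q_{i})=q_{j^{*}}-p_{j^{*}}$. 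Comparing the two and using $p_{i}-\min(p_{i},q_{i})\geq p_{i}-q_{i}$, with equality exactly when $p_{i}\geq q_{i}$, forces $p_{i}\geq q_{i}$ for every $i\neq j^{*}$. Hence $p_{i}<q_{i}$ holds for at most the single index $j^{*}$, which is the asserted condition (and $j^{*}$ is the ``single fixed $j$'' of the theorem).

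For the ``if'' direction, assume $p_{i}\geq q_{i}$ for all $i$ except at most one index $j^{*}$; if there is no exception then $p=q$ and the diagonal coupling with $r_{ii}=p_{i}$ and $r_{ij}=0$ for $i\neq j$ settles the case, so take $j^{*}$ to be the unique exceptional index, for which necessarily $p_{j^{*}}<q_{j^{*}}$. I would exhibit the coupling $r_{ii}=\min(p_{i},q_{i})$, $r_{ij^{*}}=p_{i}-q_{i}$ for $i\neq j^{*}$, and $r_{ij}=0$ otherwise, and verify (\ref{eq: bunch 1 of 1-2})-(\ref{eq: 1 of 1-2})-(\ref{eq: 2 of 1-2}). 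The row sums, the column sums, and (\ref{eq: 1 of 1-2}) are immediate. The pairwise constraint (\ref{eq: 2 of 1-2}) splits into the case where neither index is $j^{*}$, where both diagonal entries equal the respective $q$'s and $q_{i}+q_{j}=\min(p_{i}+p_{j},q_{i}+q_{j})$ because $p\geq q$ on both coordinates, and the case where one index $l\neq j^{*}$ and the other is $j^{*}$, where the left side collapses to $q_{l}+(p_{l}-q_{l})+p_{j^{*}}=p_{l}+p_{j^{*}}$.

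The step I expect to be the main obstacle, and the only genuine content beyond bookkeeping, is checking in this last case that $p_{l}+p_{j^{*}}=\min(p_{l}+p_{j^{*}},q_{l}+q_{j^{*}})$, i.e. the ``$p$-minimization'' inequality $p_{l}+p_{j^{*}}\leq q_{l}+q_{j^{*}}$ that Theorem \ref{thm: 1-2 system} demands of the off-diagonal column. This is where the single-exception hypothesis does the real work: from $\sum_{i}(p_{i}-q_{i})=0$ one obtains $q_{j^{*}}-p_{j^{*}}=\sum_{i\neq j^{*}}(p_{i}-q_{i})$, a sum of nonnegative terms, so each term satisfies $p_{l}-q_{l}\leq q_{j^{*}}-p_{j^{*}}$, which rearranges to exactly the required inequality. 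Once this is in hand both directions close, and combining them with the $p\leftrightarrow q$ symmetry yields the corollary.
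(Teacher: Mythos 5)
Your proof is correct and follows essentially the same route as the paper: your ``if''-direction coupling ($r_{ii}=\min(p_i,q_i)$, $r_{ij^{*}}=p_i-q_i$ for $i\neq j^{*}$, zeros elsewhere) is exactly the paper's construction, including the key inequality $p_l+p_{j^{*}}\leq q_l+q_{j^{*}}$, which you rightly flag as the only real content and which the paper glosses over with the remark that the exceptional column is ``$p$-minimized''. For the ``only if'' direction, which the paper dismisses as obvious, your derivation from Theorem \ref{thm: 1-2 system} is valid, though it can be shortened: for $i\neq j^{*}$ the $i$th column contains only its diagonal entry, so (\ref{eq: bunch 1 of 1-2}) and (\ref{eq: 1 of 1-2}) give $q_i=r_{ii}=\min(p_i,q_i)\leq p_i$ at once.
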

\begin{proof}[Proof of Corollary \ref{cor: 1-2-system}]
The ``only if'' part is obvious. To demonstrate the ``if'' part,
consider (without loss of generality) the arrangement

\begin{center}

\begin{tabular}{c|c|c|c|c|c|c|c||c}
\multicolumn{1}{c}{} & \multicolumn{1}{c}{$1$} & \multicolumn{1}{c}{$2$} & \multicolumn{1}{c}{$3$} & \multicolumn{1}{c}{$4$} & \multicolumn{1}{c}{$5$} & \multicolumn{1}{c}{$6$} & \multicolumn{1}{c}{$\ldots$} & \tabularnewline
\cline{2-9} 
$1$ &  &  &  &  &  &  & $\ldots$ & $p_{1}\geq q_{1}$\tabularnewline
\cline{2-9} 
$2$ &  &  &  &  &  &  & $\ldots$ & $p_{2}$\tabularnewline
\cline{2-9} 
$3$ &  &  &  &  &  &  & $\ldots$ & $p_{3}\geq q_{3}$\tabularnewline
\cline{2-9} 
$4$ &  &  &  &  &  &  & $\ldots$ & $p_{4}\geq q_{4}$\tabularnewline
\cline{2-9} 
$5$ &  &  &  &  &  &  & $\ldots$ & $p_{5}\geq q_{5}$\tabularnewline
\cline{2-9} 
$6$ &  &  &  &  &  &  & $\ldots$ & $p_{6}\geq q_{6}$\tabularnewline
\cline{2-9} 
$\vdots$ & $\vdots$ & $\vdots$ & $\vdots$ & $\vdots$ & $\vdots$ & $\vdots$ & $\vdots\vdots\vdots$ & $\vdots$\tabularnewline
\cline{2-9} 
 & $q_{1}$ & $q_{2}\geq p_{2}$ & $q_{3}$ & $q_{4}$ & $q_{5}$ & $q_{6}$ & $\ldots$ & \tabularnewline
\cline{9-9} 
\end{tabular}

\end{center}

\noindent and fill it in as

\begin{center}

\begin{tabular}{c|c|c|c|c|c|c|c||c}
\multicolumn{1}{c}{} & \multicolumn{1}{c}{$1$} & \multicolumn{1}{c}{$2$} & \multicolumn{1}{c}{$3$} & \multicolumn{1}{c}{$4$} & \multicolumn{1}{c}{$5$} & \multicolumn{1}{c}{$6$} & \multicolumn{1}{c}{$\ldots$} & \tabularnewline
\cline{2-9} 
$1$ & $q_{1}$ & $p_{1}-q_{1}$ & $0$ & $0$ & $0$ & $0$ & \textbf{$\mathbf{0}$} & $p_{1}\geq q_{1}$\tabularnewline
\cline{2-9} 
$2$ & $0$ & $p_{2}$ & $0$ & $0$ & $0$ & $0$ & \textbf{$\mathbf{0}$} & $p_{2}$\tabularnewline
\cline{2-9} 
$3$ & $0$ & $p_{3}-q_{3}$ & $q_{3}$ & $0$ & $0$ & $0$ & \textbf{$\mathbf{0}$} & $p_{3}\geq q_{3}$\tabularnewline
\cline{2-9} 
$4$ & $0$ & $p_{4}-q_{4}$ & $0$ & $q_{4}$ & $0$ & $0$ & \textbf{$\mathbf{0}$} & $p_{4}\geq q_{4}$\tabularnewline
\cline{2-9} 
$5$ & $0$ & $p_{5}-q_{5}$ & $0$ & $0$ & $q_{5}$ & $0$ & \textbf{$\mathbf{0}$} & $p_{5}\geq q_{5}$\tabularnewline
\cline{2-9} 
$6$ & $0$ & $p_{6}-q_{6}$ & $0$ & $0$ & $0$ & $q_{6}$ & \textbf{$\mathbf{0}$} & $p_{6}\geq q_{6}$\tabularnewline
\cline{2-9} 
$\vdots$ & \textbf{$\mathbf{0}$} & $\vdots$ & \textbf{$\mathbf{0}$} & \textbf{$\mathbf{0}$} & \textbf{$\mathbf{0}$} & \textbf{$\mathbf{0}$} & $\vdots\vdots\vdots$ & $\vdots$\tabularnewline
\cline{2-9} 
 & $q_{1}$ & $q_{2}\geq p_{2}$ & $q_{3}$ & $q_{4}$ & $q_{5}$ & $q_{6}$ & $\ldots$ & \tabularnewline
\cline{9-9} 
\end{tabular}

\end{center}

\noindent with the empty cells filled in with zeros. Check that (a)
all rows sum to the marginals; (b) the second column sums to 
\[
\sum_{i=1}^{k}p_{i}-\left(\sum_{i=1}^{k}q_{i}-q_{2}\right)=q_{2};
\]
(c) the rest of the columns sum to the marginals; (d) all $r_{ii}$
are $\min\left(p_{i},q_{i}\right)$; and (e) for all pairs $r_{ij}$
($i\not=j$) the sums $r_{ii}+r_{ij}+r_{ji}+r_{jj}$ equal $\min\left(p_{i}+p_{j},q_{i}+q_{j}\right)$.
The latter is proved by considering first all $j\not=2$, where it
is obvious, and then $j=2$ where the computation is, for $i\not=2$,
\[
r_{ii}+r_{i2}+r_{2i}+r_{22}=q_{i}+\left(p_{i}-q_{i}\right)+0+p_{2}=p_{i}+p_{2},
\]
as it should be because the values in the second column are to be
$p$-minimized. 
\end{proof}
\rule[0.5ex]{0.9\columnwidth}{1pt}
\begin{thm*}[Section \ref{sec: A-systematic-study}, Theorem \ref{thm: exists unique}]
The system $\D$ is noncontextual if and only if its 1-2 subsystem
is noncontextual, i.e., if and only if one of the $R_{1}^{1}$\emph{
and }$R_{1}^{2}$ nominally dominates the other.
\end{thm*}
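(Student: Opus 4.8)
The plan is to route the biconditional through the 1-2 subsystem. The second equivalence in the statement, ``1-2 subsystem noncontextual $\iff$ one of $R_1^1,R_1^2$ nominally dominates the other,'' is precisely Corollary~\ref{cor: 1-2-system}, so nothing new is needed there. The first equivalence, ``$\D$ noncontextual $\iff$ 1-2 subsystem noncontextual,'' splits into two implications. The forward one is immediate and is exactly Theorem~\ref{thm: 1-2 =000023 constraints}: a maximally-connected coupling of $\D$ restricts to an assignment $(r_{ij})$ obeying (\ref{eq: bunch 1 of 1-2})--(\ref{eq: 1 of 1-2})--(\ref{eq: 2 of 1-2}), which is the same thing as a maximally-connected coupling of the 1-2 subsystem. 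Hence all the real work is in the converse.

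For the converse I would start from the unique coupling $S$ furnished by Theorem~\ref{thm: 1-2 system}, whose matrix $(r_{ij})$ is supported on the diagonal together with a single off-diagonal line (one fixed row or one fixed column). After the relabeling of Corollary~\ref{cor: 1-2-system} I may assume $R_1^1$ nominally dominates $R_1^2$ with the unique exceptional index moved to $2$, so that $p_i\geq q_i$ for every $i\neq 2$ and $p_2\leq q_2$, and the distinguished line is column~$2$. Because each split $D_W^c$ is a deterministic function of the bunch $D^c$, this one $S$ is simultaneously a coupling of \emph{every} connection of $\D$; what has to be shown is only that $S$ is \emph{maximal} on each connection, i.e.\ that $\sum_{i,j\in W}r_{ij}=\min\bigl(\sum_{i\in W}p_i,\sum_{i\in W}q_i\bigr)$ for every $W$. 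By Theorem~\ref{thm: higher-order splits} the left side of this equality is exactly the right side of (\ref{eq: relation}), so the converse reduces to verifying that (\ref{eq: relation}) holds for the nominally dominant pair.

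To verify (\ref{eq: relation}) I would evaluate both sides directly. Here $\min(p_i,q_i)=q_i$ for $i\neq 2$ and $\min(p_2,q_2)=p_2$. The crucial point is that each excess $p_i-q_i$ (with $i\neq 2$) is one nonnegative summand of $\sum_{j\neq 2}(p_j-q_j)=q_2-p_2$ (the equality coming from $\sum p=\sum q=1$), so $p_i-q_i\leq q_2-p_2$, i.e.\ $p_i+p_2\leq q_i+q_2$. Consequently the pairwise bracket $\min(p_i+p_j,q_i+q_j)-\min(p_i,q_i)-\min(p_j,q_j)$ is $0$ when $2\notin\{i,j\}$ and equals $p_i-q_i$ when $j=2$. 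Summing over $W$, the right side of (\ref{eq: relation}) telescopes to $\sum_{i\in W}q_i$ if $2\notin W$ and to $\sum_{i\in W}p_i$ if $2\in W$; these match the left side $\min\bigl(\sum_{i\in W}p_i,\sum_{i\in W}q_i\bigr)$ because $\sum_{i\in W}p_i\geq\sum_{i\in W}q_i$ term by term in the first case, while in the second $\sum_{i\in W}(p_i-q_i)=(p_2-q_2)+\sum_{i\in W\setminus\{2\}}(p_i-q_i)\leq (p_2-q_2)+(q_2-p_2)=0$.

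The step I expect to be the main obstacle is this last sign accounting: one must recognize that the total excess of the dominator over the dominated across all non-exceptional indices is exactly $q_2-p_2$, so that restricting that nonnegative sum to any $W$ containing $2$ can never overshoot $q_2-p_2$, and the minimum on the left of (\ref{eq: relation}) is therefore always attained on the $p$-side. Once the pairwise brackets are seen to vanish away from the distinguished index---itself a one-line consequence of nominal dominance---the remaining bookkeeping is routine.
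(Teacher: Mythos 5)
Your proposal is correct and takes essentially the same route as the paper's own proof: the ``only if'' direction is delegated to Theorem~\ref{thm: 1-2 =000023 constraints}, and the ``if'' direction reduces, via the identity behind Theorem~\ref{thm: higher-order splits}, to verifying relation~(\ref{eq: relation}) for the nominally dominant arrangement with exceptional index $2$, split into the cases $2\notin W$ and $2\in W$ exactly as the paper does. Your explicit sign accounting (each excess $p_i-q_i$ is a summand of $\sum_{x\neq 2}(p_x-q_x)=q_2-p_2$, so the pairwise bracket involving index $2$ equals $p_i-q_i$ and the minimum falls on the $p$-side) is the same computation the paper performs, merely spelled out more carefully.
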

\begin{proof}[Proof of Theorem \ref{thm: exists unique}]
The ``only if'' part is Theorem \ref{thm: 1-2 =000023 constraints}.
All we need to proof the ``if `` part is to check that the relation
(\ref{eq: relation}) holds. Assume the arrangement is as in the previous
corollary. Consider first any set $i_{1},\ldots,i_{m}$ that does
not include 2:
\[
\min\left(p_{i_{1}}+p_{i_{2}}+\ldots+p_{i_{m}},q_{i_{1}}+q_{i_{2}}+\ldots+q_{i_{m}}\right)=q_{i_{1}}+q_{i_{2}}+\ldots+q_{i_{m}},
\]
\[
\sum_{j=1}^{m}\min\left(p_{i_{j}},q_{i_{j}}\right)=q_{i_{1}}+q_{i_{2}}+\ldots+q_{i_{m}},
\]
\[
\min\left(p_{i_{j}}+p_{i_{j'}},q_{i_{j}}+q_{i_{j'}}\right)-\min\left(p_{i_{j}},q_{i_{j}}\right)-\min\left(p_{i_{j'}},q_{i_{j'}}\right)=0.
\]
So, (\ref{eq: relation}) holds. If one of the indices (let it be
$i_{1}$) is 2, then 
\[
q_{2}+q_{i_{2}}+\ldots+q_{i_{m}}=\left(p_{2}+\sum_{x\not=2}\left(p_{x}-q_{x}\right)\right)+q_{i_{2}}+\ldots+q_{i_{m}}>p_{2}+p_{i_{2}}+\ldots+p_{i_{m}},
\]
so
\[
\min\left(p_{2}+p_{i_{2}}+\ldots+p_{i_{m}},q_{2}+q_{i_{2}}+\ldots+q_{i_{m}}\right)=p_{2}+p_{i_{2}}+\ldots+p_{i_{m}}.
\]
We also have

\[
\sum_{j=1}^{m}\min\left(p_{i_{j}},q_{i_{j}}\right)=p_{2}+q_{i_{2}}+\ldots+q_{i_{m}},
\]
and for any $j\not=2,j'\not=2$,
\[
\min\left(p_{i_{j}}+p_{i_{j'}},q_{i_{j}}+q_{i_{j'}}\right)-\min\left(p_{i_{j}},q_{i_{j}}\right)-\min\left(p_{i_{j'}},q_{i_{j'}}\right)=0,
\]
\[
\min\left(p_{2}+p_{i_{j}},q_{2}+q_{i_{j}}\right)-\min\left(p_{2},q_{2}\right)-\min\left(p_{i_{j}},q_{i_{j}}\right)=p_{i_{j}}-q_{i_{j}}.
\]
Since index $i_{1}=2$ is paired with each of $i_{2},\ldots,i_{m}$
only once, the right-hand side in (\ref{eq: relation}) is
\[
p_{2}+q_{i_{2}}+\left(p_{i_{2}}-q_{i_{2}}\right)+\ldots+q_{i_{m}}+\left(p_{i_{m}}-q_{i_{m}}\right)=p_{2}+p_{i_{2}}+\ldots+p_{i_{m}}.
\]
\end{proof}


\begin{thebibliography}{10}
\bibitem{AbramskyBrandenburger(2011)}Abramsky S, Brandenburger A
2011 The sheaf-theoretic structure of non-locality and contextuality.
New J. Phys. 13, 113036-113075. 

\bibitem{AbramskyBarbosaKishidaLalMansfield(2015)}Abramsky S, Barbosa
RS, Kishida K, Lal R, Mansfield, S 2015 Contextuality, cohomology
and paradox. Comp. Sci. Log. 2015: 211-228. 

\bibitem{Asanoetal2015}Asano M, Basieva I, Khrennikov A, Ohya M,
Tanaka Y, Yamato I 2015 Quantum information biology: from information
interpretation of quantum mechanics to applications in molecular biology
and cognitive psychology. Found. Phys, 45, 1362-1378. 

\bibitem{Bacciagaluppi(2015)}Bacciagaluppi G 2015 Leggett-Garg inequalities,
pilot waves and contextuality. Int. J. Quant. Found. 1, 1-17.

\bibitem{BusemeyerBruza2012}Busemeyer JR, Bruza PD. 2012 Quantum
Cognition and Decision. Cambridge, UK: Cambridge University Press.

\bibitem{Cabello(2013)}Cabello A 2013 Simple explanation of the quantum
violation of a fundamental inequality. Phys. Rev. Lett., 110, 060402. 

\bibitem{CervantesDzhafarov2017}Cervantes VH, Dzhafarov EN Advanced
analysis of quantum contextuality in a psychophysical double-detection
experiment. To be published in Journal of Mathematical Psychology.

\bibitem{DK2012JMP}Dzhafarov EN, Kujala JV. 2012 Selectivity in probabilistic
causality: Where psychology runs into quantum physics. J. Math. Psychol.
56, 54-63.

\bibitem{DK2014Scripta}Dzhafarov EN, Kujala, JV. 2014 Contextuality
is about identity of random variables. Phys. Scripta T163, 014009.

\bibitem{TNHMP}Dzhafarov EN, Kujala JV 2016 Probability, random variables,
and selectivity. In W. Batchelder, H. Colonius, E.N. Dzhafarov, J.
Myung (Eds) pp. 85-150. The New Handbook of Mathematical Psychology.
Cambridge University Press.

\bibitem{DzhafarovKujala(2016)Context-Content}Dzhafarov EN, Kujala
JV 2016 Context-content systems of random variables: The contextuality-by-default
theory. J. Math. Psych. 74, 11-33. 

\bibitem{DzhafarovKujala(2016)Fortschritte}Dzhafarov EN, Kujala JV
2017 Probabilistic foundations of contextuality. To be published in
Fort. Phys. - Prog. Phys. 

\bibitem{DzhafarovKujala(2017)LNCS}Dzhafarov EN, Kujala JV 2017 Contextuality-by-Default
2.0: Systems with binary random variables. In J.A. de Barros, B. Coecke,
E. Pothos (Eds.) Lect. Not. Comp. Sci. 10106, 16-32. 

\bibitem{DzhafarovKujalaCervantes(2016)LNCS}Dzhafarov EN, Kujala,
JV, Cervantes VH 2016 Contextuality-by-Default: A brief overview of
ideas, concepts, and terminology. In H. Atmanspacher, T. Filk, E.
Pothos (Eds.) Lect. Not. Comp. Sci. 9535, 12-23.

\bibitem{DzhafarovKujalaCervantesZhangJones(2016)}Dzhafarov EN, Kujala
JV, Cervantes VH, Zhang R, Jones M 2016 On contextuality in behavioral
data. Phil. Trans. Roy. Soc. A 374: 20150234. 

\bibitem{DzhafarovKujalaLarsson(2015)}Dzhafarov EN, Kujala JV, Larsson
J-Å 2015 Contextuality in three types of quantum-mechanical systems.
Found. Phys. 7, 762-{}-782. 

\bibitem{DzhafarovZhangKujala(2015)Isthere}Dzhafarov EN, Zhang R,
Kujala JV 2016 Is there contextuality in behavioral and social systems?
Phil. Trans. Roy. Soc. A 374, 20150099. 

\bibitem{Fine1982}Fine A. 1982 Joint distributions, quantum correlations,
and commuting observables.\emph{ }J. Math. Phys. 23, 1306\textendash 1310.
(doi:10.1063/1.525514)

\bibitem{HavenKhrennikov2012}Haven E, Khrennikov A. 2012 Quantum
Social Science. Cambridge, UK: Cambridge University Press.

\bibitem{Khr2005}Khrennikov A. 2005 The principle of supplementarity:
A contextual probabilistic viewpoint to complementarity, the interference
of probabilities, and the incompatibility of variables in quantum
mechanics. Found. Phys., 35, 1655 - 1693.

\bibitem{Khr2009}Khrennikov A 2009 Contextual approach to quantum
formalism. Springer: Berlin.

\bibitem{Khrennikov2010}Khrennikov A 2010 Ubiquitous Quantum Structure:
From Psychology to Finance. Springer: Berlin. 

\bibitem{Khrennikova}Khrennikova P, Haven E 2016 Instability of political
preferences and the role of mass media: a representation in quantum
framework . Phil. Trans. Roy. Soc. A 374, 20150106.

\bibitem{KujalaDzhafarov(2015)}Kujala JV, Dzhafarov EN 2015 Probabilistic
Contextuality in EPR/Bohm-type systems with signaling allowed. In
E.N. Dzhafarov, S. Jordan, R. Zhang, V.H. Cervantes (Eds.) Contextuality
from Quantum Physics to Psychology, pp. 287-308. New Jersey: World
Scientific. 

\bibitem{KujalaDzhafarov(2016)Proof}Kujala JV, Dzhafarov EN 2016
Proof of a conjecture on contextuality in cyclic systems with binary
variables. Found. Phys. 46, 282-299. 

\bibitem{KujalaDzhafarovLar(2015)}Kujala JV, Dzhafarov EN, Larsson
J-Å 2015 Necessary and sufficient conditions for maximal noncontextuality
in a broad class of quantum mechanical systems. Phys. Rev. Lett. 115,
150401. 

\bibitem{KurzynskiRamanathanKaszlikowski(2012)}Kurzynski P, Ramanathan
R, Kaszlikowski D 2012 Entropic test of quantum contextuality. Phys.
Rev. Lett. 109, 020404. 

\bibitem{Moore}Moore DW. 2002 Measuring new types of question-order
effects. Public Opin. Quart. 66, 80-91.

\bibitem{SuppesZanotti1981}Suppes P, Zanotti M. 1981 When are probabilistic
explanations possible? Synthese 48, 191.

\bibitem{WangBusemeyer2013}Wang Z, Busemeyer JR. 2013 A quantum question
order model supported by empirical tests of an a priori and precise
prediction.Top. Cogn. Sci. 5, 689-{}-710. 
\end{thebibliography}
\end{document}